\theoremstyle{definition}
\theoremstyle{definition} 
\theoremstyle{definition}
\long\def\symbolfootnote[#1]#2{\begingroup%
\def\thefootnote{\fnsymbol{footnote}}\footnote[#1]{#2}\endgroup}
\newcommand{\documenttitle}{Thesis}
\newcommand{\argmax}{\operatornamewithlimits{arg \, max}}
\renewcommand{\max}{\operatornamewithlimits{max}}
\renewcommand{\min}{\operatornamewithlimits{min}}
\newcommand{\be}{\begin{equation}}
\newcommand{\ee}{\end{equation}}
\newcommand{\bes}{\begin{equation*}}
\newcommand{\ees}{\end{equation*}}
\newcommand{\rdots}{\mathinner{%
  \mkern1mu\raise1pt\hbox{.}%
  \mkern2mu\raise4pt\hbox{.}%
  \mkern2mu\raise7pt\vbox{\kern7pt\hbox{.}}\mkern1mu}}
\DeclareMathOperator{\trace}{trace}
\DeclareFontFamily{U}{mathx}{\hyphenchar\font45}
\DeclareFontShape{U}{mathx}{m}{n}{
      <5> <6> <7> <8> <9> <10>
      <10.95> <12> <14.4> <17.28> <20.74> <24.88>
      mathx10
      }{}
\DeclareSymbolFont{mathx}{U}{mathx}{m}{n}
\DeclareMathAccent{\widecheck}{0}{mathx}{"71}
\DeclareMathAccent{\wideparen}{0}{mathx}{"75}
\def\@footnotecolor{purple!30!blue}
\patchcmd{\@footnotemark}{\hyper@linkstart{link}}{\hyper@linkstart{footnote}}{}{}
\newcommand\nc{\newcommand}
\nc\on{\operatorname}
\theoremstyle{definition} 
\theoremstyle{definition} \newtheorem*{thmNoNum}{Theorem}
\theoremstyle{definition} \newtheorem{definition}{Definition}
\theoremstyle{definition} \theoremstyle{remark}
\theoremstyle{definition} 
\theoremstyle{definition} 
\theoremstyle{definition} \theoremstyle{plain}
\theoremstyle{definition} 
\theoremstyle{definition} 
\theoremstyle{definition} 
\theoremstyle{definition} \newtheorem{prop}{Proposition}
\theoremstyle{definition} 
\theoremstyle{definition}
\theoremstyle{definition} 
\theoremstyle{definition}
\theoremstyle{definition} 
\theoremstyle{definition} 
\theoremstyle{definition} 
\newtheorem{fact}{Fact}
\theoremstyle{definition}
\title{
Spectral Methods in Microeconomics
}
\author{
  Benjamin Golub
  \affil{
    The  author is a professor of economics and computer science at Northwestern. His email address is bgolub@northwestern.edu.
    }
}
\begin{document}

\maketitle

Matrices often appear in formal models of social and economic behavior, especially models involving networks. Such models are used to study subjects ranging from opinion dynamics to pollution-mitigation negotiations to the regulation of large marketplace platforms. Matrices are used to capture the focal economic structure in each case.

Spectral theory offers powerful tools for understanding matrices, and economic modelers have leveraged these tools to gain considerable insight. When special structure is present, such as nonnegativity or symmetry, more refined tools suited to this structure---such as Perron--Frobenius theory and the spectral theorem---offer additional leverage. This essay uses these unifying mathematical threads to offer an accessible tour of several important ideas in social science, assuming minimal non-mathematical background knowledge. Though the introductions to each topic are necessarily brief, the tour cites references throughout for more context.

\section*{Central Notions}

We start with a few standard definitions from the theory of nonnegative matrices. The applications that follow provide motivation and intuition.

A nonnegative $n$-by-$n$ matrix $M=[M_{ij}]$ corresponds to a weighted digraph on the nodes \([n]=\{1,2,\ldots,n\}\), where the entry \(M_{ij} > 0\) is the weight of the edge $(i,j)$ from $i$ to $j$. A digraph is strongly connected if for any two distinct nodes $i$ and $j$, there exists a directed path from $i$ to $j$. The matrix $M$ is called irreducible if this digraph is strongly connected, or equivalently, if there exists no permutation matrix $P$ such that $P^{-1}MP$ is block upper triangular.

The \emph{spectral radius} of a matrix $M$, denoted by $\rho(M)$, is the maximum modulus of its eigenvalues.

\begin{thmNoNum}[Perron--Frobenius]
Let \(n \ge 2\), and let \(M\) be an \(n\times n\) nonnegative, irreducible matrix. Then:
\begin{enumerate}
    \item \(M\) has a positive real eigenvalue \(\lambda\) equal to its spectral radius \(\rho(M)\).
    \item There are vectors \(c, r \in \mathbb{R}^n\), with all entries strictly positive, such that \(c^\top M = \lambda\,c^\top\) and \(Mr = \lambda\,r\). These are called the left and right Perron vectors.
    \item Any nonnegative left (resp., right) eigenvector of \(M\) with any positive eigenvalue is a scalar multiple of \(c^\top\) (resp. \(r\)).
\end{enumerate}
\end{thmNoNum}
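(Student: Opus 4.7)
The plan is to combine a topological existence argument with a combinatorial positivity argument powered by irreducibility. After securing a positive right eigenvector and, by symmetry, a positive left eigenvector $c^\top$, every remaining claim will follow by pairing candidate eigenvectors against $c^\top$. Concretely, on the simplex $\Delta = \{x \in \mathbb{R}^n_{\geq 0} : \mathbf{1}^\top x = 1\}$ consider $T(x) = Mx/(\mathbf{1}^\top Mx)$. Irreducibility with $n \geq 2$ forces every column of $M$ to contain a positive entry (otherwise some node has no incoming edges, breaking strong connectivity), so $Mx \neq 0$ on $\Delta$ and $T$ is continuous. Brouwer's fixed-point theorem then produces $r \in \Delta$ with $Mr = \lambda r$ and $\lambda = \mathbf{1}^\top Mr > 0$. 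To upgrade $r \geq 0$ to $r > 0$, I would observe that $(I+M)^{n-1} r = (1+\lambda)^{n-1} r$, while the $(i,j)$ entry of $(I+M)^{n-1}$ is a sum of weighted walks from $i$ to $j$ of length at most $n-1$; strong connectivity makes each such entry strictly positive, so $(I+M)^{n-1} r > 0$ and hence $r > 0$. Applying the same construction to $M^\top$ supplies a strictly positive left eigenvector $c^\top$ with some positive eigenvalue $\mu$.

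Next I would identify $\lambda$ with $\rho(M)$. Computing $c^\top M r$ two ways gives $\lambda\, c^\top r = \mu\, c^\top r$, and since $c, r > 0$ this forces $\mu = \lambda$. For the spectral-radius claim, take any complex eigenpair $(\nu, u)$ with $Mu = \nu u$ and write $|u|$ for the entrywise modulus. The triangle inequality yields $|\nu|\,|u| \leq M|u|$ entrywise, so multiplying on the left by $c^\top > 0$ gives $|\nu|\, c^\top |u| \leq c^\top M|u| = \lambda\, c^\top |u|$; since $c^\top |u| > 0$, we conclude $|\nu| \leq \lambda$, so $\lambda = \rho(M)$.

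Finally I would establish uniqueness and conclusion (3) in one stroke. Suppose $Mu = \nu u$ with $u \geq 0$, $u \neq 0$, $\nu > 0$. Pairing with $c^\top$ as above forces $\nu = \lambda$. Set $\alpha = \min_i u_i/r_i$ (well-defined since $r > 0$) and $v = u - \alpha r$; then $v \geq 0$, $Mv = \lambda v$, and at least one coordinate of $v$ vanishes by the choice of $\alpha$. If $v \neq 0$, the $(I+M)^{n-1}$ argument from the first step would force $v > 0$, contradicting the zero coordinate; hence $v = 0$ and $u = \alpha r$. The symmetric statement for left eigenvectors follows by applying the same argument to $M^\top$. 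I expect the main subtlety to be the positivity upgrade via $(I+M)^{n-1}$: this is where the graph-theoretic definition of irreducibility does essential work, by identifying matrix powers with sums over walks, and it is the mechanism that simultaneously rules out nonnegative eigenvectors with surprise zero coordinates and delivers uniqueness up to scale.
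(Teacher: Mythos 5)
The paper states the Perron--Frobenius theorem as classical background and supplies no proof of it, so there is no in-paper argument to compare yours against; the only fair assessment is of your proof on its own terms. It is correct, and it follows one of the standard self-contained routes: Brouwer's fixed-point theorem applied to $x \mapsto Mx/(\mathbf{1}^\top Mx)$ on the simplex for existence of a nonnegative eigenpair, entrywise positivity of $(I+M)^{n-1}$ (via the identification of matrix powers with weighted walks, plus strong connectivity) to upgrade nonnegativity to strict positivity, and pairing arbitrary eigenvectors against the positive left eigenvector $c^\top$ to obtain both the bound $|\nu| \le \lambda$ and the rigidity statement (3). Each step checks out: under irreducibility with $n \ge 2$ every column of $M$ is nonzero, so the map is well defined and continuous on the simplex; every ordered pair of nodes is joined by a walk of length at most $n-1$ (length $0$ when $i=j$), so $(I+M)^{n-1}$ is entrywise positive; the computation $c^\top M r$ two ways correctly identifies the left and right eigenvalues; and the subtraction trick $v = u - \alpha r$ with $\alpha = \min_i u_i/r_i$ forces $v = 0$ exactly as you say. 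The one caveat is cosmetic and concerns the statement rather than your argument: part (2) asserts that $c$ and $r$ are ``uniquely determined,'' which can only mean uniquely determined up to positive scaling (or after fixing a normalization such as $r \in \Delta_n$), and uniqueness up to scale is precisely what your argument delivers.
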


In network theory, the entries of $c^\top$ are agents' (left-hand) \emph{eigenvector centralities} in the $M$ digraph. The system of equations \begin{equation} \label{eq:centrality} \lambda c_i=\sum_j c_j M_{ji}  \text{ for each $i$},\end{equation} defining the eigenvector $c^\top$ says that node $i$'s centrality is a weighted sum of others' centralities, with $c_j$ contributing in proportion to the weight of the link from $j$ to $i$ in the digraph.

Well before the models of behavior we are about to discuss were proposed, sociologists were interested in centralities $c_i$ satisfying \cref{eq:centrality} as intrinsically plausible indices of the importance, connectedness, or status of nodes in a network. The idea that ``the cool kids are the ones that the cool kids pay attention to'' is likely to be familiar from high school or elsewhere, and the centrality equation captures this fixed-point property in linear form. Nothing anchors eigenvector centralities to any external source of node values; \cref{eq:centrality} contains no constant terms. It might therefore seem possible to consistently assign centralities to satisfy the equation in many different ways. Remarkably, however, in a strongly connected digraph these relative centralities are uniquely pinned down up to a common scale factor; this is the content of (3). 

We now turn to some applications.

\section*{Social Influence}

An adage that rings true to me\footnote{The true origin is uncertain, but its first prominent print appearance attributed it to motivational speaker Jim Rohn \cite{canfield2005success}.} goes, ``You are the average of the five people you spend the most time with.'' A simple yet surprisingly illuminating model of social learning---named for the statistician Morris DeGroot \cite{degroot1974reaching}---takes this idea seriously.

Consider a set of $n \geq 2$ agents (which might be people or, in engineering applications, robotic sensors), each with an evolving \emph{opinion} concerning some quantity. The opinion of agent $i$ is an element $x_i \in V$, where $V$ is a convex subset of a finite-dimensional vector space equipped with an inner product. For instance, $x_i$ may be a number describing how good Taylor Swift's music is; a probability distribution over some set $\Omega$ describing beliefs about the future of climate change (where $\Omega$ describes various dimensions of the observable world);  a vector representation of how agent $i$ pronounces a particular word; or an estimate of the ambient temperature. In the DeGroot model, in each period, agents update their opinions by taking weighted averages of opinions from the previous period, possibly including their own. The vector\footnote{Vectors are column vectors by default.} of opinions at time $t$, denoted by $x(t) \in V^n$, evolves according to:
$$ x(t+1) = Mx(t) \quad \text{for $t=0,1,2,\ldots$}$$ The data of this process are the $n$-by-$n$ row-stochastic matrix $M$ and the vector of initial opinions $x(0) \in V^n$, with $M_{ij} \geq 0$ representing the weight agent $i$ places on agent $j$'s most recent opinion. One interpretation is that \(M_{ij} > 0\) only if \(i\) has access to \(j\)'s opinion (e.g., because \(i\) knows \(j\) personally or follows \(j\) on social media), and the magnitude of \(M_{ij}\) reflects how much \(i\) is influenced by \(j\).  We will typically dispense with the generality of an arbitrary $V$ and focus on the case $V=\mathbb{R}$ from now on.

The dynamics are simple: $x(t) = M^t x(0)$. Can agents in this model disagree forever? Under some natural conditions, the answer is no. The conditions concern the digraph associated to $M$. A digraph is called \emph{aperiodic} if the greatest common divisor of the lengths of all its directed cycles is $1$. Let $\Delta_n$ denote the set of probability distributions on $[n]$, viewed as row vectors.

\begin{fact} \label{fact:degroot_convergence} If the digraph of $M$ is strongly connected and aperiodic, opinions converge to a consensus, meaning that $\lim_{t \to \infty} x(t) = a \mathbf{1}$ for some $a \in V$, where $\mathbf{1}$ is the vector of ones. In this case, the consensus opinion $a$ is determined by the unique left eigenvector $c^\top \in \Delta_n$ of $M$  corresponding to the eigenvalue 1:
$$ a = c^\top x(0). $$
\end{fact}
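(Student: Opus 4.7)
The plan is to establish convergence to a consensus $a\mathbf{1}$ via a monotonicity-and-contraction argument, then pin down $a$ using a conservation law. Set $\overline{x}(t) := \max_i x_i(t)$ and $\underline{x}(t) := \min_i x_i(t)$. Because $M$ is row-stochastic, each $x_i(t+1)$ is a convex combination of the $x_j(t)$, so $\overline{x}(t)$ is non-increasing and $\underline{x}(t)$ is non-decreasing; the gap $g(t) := \overline{x}(t) - \underline{x}(t)$ is therefore non-increasing. To show $g(t) \to 0$, I would invoke the standard combinatorial fact that strong connectedness together with aperiodicity implies \emph{primitivity}: there is an integer $N$ with $M^N > 0$ entrywise. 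Setting $\delta := \min_{i,j}(M^N)_{ij} > 0$, for $y = M^N z$ and any index $k$ realizing $\underline{z}$, the estimate
\[ y_i \leq (M^N)_{ik}\underline{z} + (1 - (M^N)_{ik})\overline{z} \leq \overline{z} - \delta(\overline{z} - \underline{z}), \]
together with its symmetric counterpart $y_i \geq \underline{z} + \delta(\overline{z} - \underline{z})$, yields $g(t+N) \leq (1 - 2\delta)\, g(t)$. Iterating, $g(t) \to 0$; combined with monotonicity, this gives $\overline{x}(t), \underline{x}(t) \to a$ for a common limit $a$, so $x(t) \to a\mathbf{1}$.

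To identify $a$, I would invoke Perron--Frobenius. Irreducibility produces a unique positive left eigenvector $c^\top$ at the spectral radius $\rho(M)$. Row-stochasticity supplies the positive right eigenvector $\mathbf{1}$ at eigenvalue $1$, so part (3) of the theorem forces $\rho(M) = 1$. Normalizing so that $c^\top \mathbf{1} = 1$ (i.e.\ $c \in \Delta_n$), the identity $c^\top M = c^\top$ yields the conservation law $c^\top x(t+1) = c^\top M x(t) = c^\top x(t)$, so $c^\top x(t) \equiv c^\top x(0)$ for all $t$. Passing to the limit, $c^\top(a\mathbf{1}) = a$ equals $c^\top x(0)$, as claimed.

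The main obstacle is the primitivity lemma and the strict contraction estimate built on top of it: aperiodicity enters the entire argument only through the existence of some $N$ with $M^N > 0$, after which $(1 - 2\delta)$-contraction of the max--min gap falls out of row-stochasticity. Both steps are classical, but each requires some care; once they are in hand, the rest is essentially bookkeeping, with Perron--Frobenius supplying the identification of the consensus value.
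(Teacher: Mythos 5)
Your proof is correct, but it takes a more self-contained route than the paper, which simply invokes the standard Markov-chain limit theorem $\lim_{t\to\infty} M^t = \mathbf{1}c^\top$ (citing Meyer) together with Perron--Frobenius for the existence and uniqueness of $c$. What you have written out in full is essentially the ``fun exercise'' the paper alludes to in a footnote: prove convergence directly from the DeGroot dynamics, using primitivity (the number-theoretic fact the paper flags as the one unavoidable ingredient) to get a strict contraction of the max--min gap. Your contraction estimate is sound --- the two one-sided bounds use the entrywise lower bound $\delta$ on $M^N$ at the argmin and argmax indices respectively, and $\delta \le 1/n$ guarantees $1-2\delta \ge 0$, so the factor is a genuine contraction constant --- and your identification of the limit via the conserved functional $c^\top x(t)$ is exactly how the paper's formula $a = c^\top x(0)$ arises from $\mathbf{1}c^\top$. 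The trade-off: the paper's citation-based argument is shorter and immediately yields the full matrix limit $M^t \to \mathbf{1}c^\top$, whereas your argument is elementary, gives an explicit geometric rate $g(t) = O\bigl((1-2\delta)^{t/N}\bigr)$, and makes transparent the paper's stated intuition that ``some agents must moderate their opinions in some bounded number of steps.'' One small remark: the Fact is stated for opinions valued in a convex subset $V$ of an inner-product space, while your max--min argument presumes $V=\mathbb{R}$; this matches the paper's announced focus, and the general case follows coordinatewise, but it is worth a sentence.
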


Existence and uniqueness of $c$ follow from the Perron--Frobenius theorem.\footnote{Since the strictly positive vector $\bm{1}$ is a right eigenvector of $M$ with eigenvalue $1$, the theorem gives that $1$ is a largest eigenvalue of $M$ and comes with a left eigenvector $c^\top$.} 

Our characterization of consensus follows from the fact that $\lim_{t \to \infty} M^t = \mathbf{1}  c^\top$---see \cite[Ch. 8]{meyer-book} for an excellent treatment. Indeed, $c^\top$ is simply the stationary distribution of the Markov chain associated with $M$. The key intuition is that as long as disagreement remains in a strongly connected network, some agents must moderate their opinions within a bounded number of steps. The entries of $c$ can be interpreted as measures of agents' social influence, with $c_i$ representing the weight of agent $i$'s initial opinion in the long-run consensus. Whereas the one-step updating dynamics prescribe that an agent's opinion is the average of the opinions in its neighborhood,  the process ultimately leads a strongly connected network to share a consensus opinion that blends all initial opinions---with particular weights.

The weights satisfy \cref{eq:centrality}. In the DeGroot model, it is natural that the influences are eigenvector centralities: influence comes from being listened to, and  an agent is influential when that agent has influential in-neighbors. However, one need not have many connections to be highly central: even a single link from a high-centrality node can suffice.

Do large networks aggregate information well? To examine this, \cite{GolubJackson2010} considered a sequence $(M(n))_{n=1}^\infty$ of irreducible stochastic matrices, with the matrix $M(n)$ having dimensions $n$-by-$n$, as a model of a large society. Imagine that the initial opinions $x_i(0)$ in network $n$ are drawn independently from distributions with finite, positive variances and a common expectation $\mu$. If the consensus $a(n)$ converges in probability to $\mu$, large communities enjoy the so-called wisdom of crowds: no individual's noisy opinion can obstruct convergence to the truth $\mu$. Standard weak law of large numbers arguments imply that this happens if and only if $(M(n))_{n=1}^\infty$ has associated eigenvector centralities $c_i(n)$ satisfying $\max_i c_i(n) \to 0$. In this case $(M(n))_{n=1}^\infty$ is called \emph{wise}.

We can identify a simple obstruction to wisdom. A sequence $(P(n))_{n=1}^\infty$ of nonempty subsets of nodes is \emph{prominent} if there exists $\epsilon > 0$ such that, for each $n$, there is a $t=t(n)$ with
$\sum_{i \in P(n)} (M(n)^t)_{ji} \geq \epsilon$ for all $j \notin P(n)$.
Intuitively, such a sequence is one that collectively has significant influence on all other agents after some number of updating rounds, with a lower bound $\epsilon$ that does not vanish with $n$.
\begin{prop}  The sequence $(M(n))_{n=1}^\infty$ is wise if and only if there is no prominent sequence with $|P(n)| \le C$ for some finite  $C$ independent of $n$. \end{prop}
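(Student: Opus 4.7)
\emph{The plan.} The workhorse will be the iterated centrality identity: since $c(n)^\top M(n)=c(n)^\top$, also $c(n)^\top M(n)^t=c(n)^\top$ for every $t\ge 1$, equivalently $c_i(n)=\sum_j c_j(n)\,(M(n)^t)_{ji}$, which is just \cref{eq:centrality} iterated $t$ times.

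For the ``prominent bounded $\Rightarrow$ not wise'' direction, I would assume $(P(n))$ is prominent with parameter $\epsilon>0$ and $|P(n)|\le K$ for every $n$, let $t=t(n)$ be a witnessing exponent, and sum the identity above over $i\in P(n)$:
$$c_{P(n)}\;:=\;\sum_{i\in P(n)}c_i(n)\;=\;\sum_{j}c_j(n)\sum_{i\in P(n)}(M(n)^t)_{ji}.$$
The inner sum is $\le 1$ by row-stochasticity and, by prominence, $\ge\epsilon$ whenever $j\notin P(n)$. Dropping the $j\in P(n)$ terms would give $c_{P(n)}\ge\epsilon(1-c_{P(n)})$, hence $c_{P(n)}\ge\epsilon/(1+\epsilon)$; since $|P(n)|\le K$, some agent in $P(n)$ has centrality at least $\epsilon/[K(1+\epsilon)]$, a bound independent of $n$, so the sequence is not wise.

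For the reverse direction, I would start from the hypothesis that there is $\delta>0$ and an infinite $N^*\subseteq\mathbb{N}$ along which $\max_i c_i(n)\ge\delta$, and pick $i^*(n)\in\argmax_i c_i(n)$ for $n\in N^*$. I would then invoke the primitive-convergence fact cited after \cref{fact:degroot_convergence}: $M(n)^t\to\mathbf{1}\,c(n)^\top$ entrywise as $t\to\infty$, so $(M(n)^t)_{j,\,i^*(n)}\to c_{i^*(n)}(n)\ge\delta$ for every $j$. Choosing $t(n)$ large enough that all these entries exceed $\delta/2$, the singleton $P(n):=\{i^*(n)\}$ witnesses prominence with $\epsilon=\delta/2$ along $N^*$, yielding a bounded prominent (sub)sequence.

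The main obstacle will be the reverse direction: one must convert the pointwise lower bound $c_{i^*}(n)\ge\delta$ into a \emph{uniform} lower bound on the entries $(M(n)^t)_{j,\,i^*(n)}$ across all $j$, and the Markov-chain limit does exactly this, at the cost of allowing $t$ to depend on $n$ (which the definition of prominence permits). A minor bookkeeping issue is that prominence requires the same $\epsilon$ for every $n$, so one has to pass to the subsequence $N^*$---harmless since wisdom is itself a $\limsup$ property.
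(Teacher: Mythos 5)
The paper itself supplies no proof of this proposition---it is imported from \cite{GolubJackson2010}---so your proposal can only be measured against that source's argument, which your forward direction essentially reproduces and which is correct: summing the iterated stationarity identity $c_i(n)=\sum_j c_j(n)(M(n)^{t})_{ji}$ over $i\in P(n)$, splitting off the $j\in P(n)$ terms, and using row-stochasticity plus prominence does give $c_{P(n)}\ge \epsilon(1-c_{P(n)})$ and hence a centrality bounded below by $\epsilon/[K(1+\epsilon)]$. (Note that you have silently read the prominence condition as $\sum_{i\in P(n)}(M(n)^t)_{ji}\ge\epsilon$, the transpose of what is literally displayed in the paper; this is surely the intended reading---under the literal one, row-stochasticity of $M(n)^t$ forces $|P(n)|\ge\epsilon\,(n-|P(n)|)$, so bounded prominent sequences could not exist at all---but you should flag that you are making the correction.)

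The genuine gap is in the reverse direction: you invoke $M(n)^t\to\mathbf{1}\,c(n)^\top$, which requires $M(n)$ to be primitive, i.e.\ aperiodic as well as irreducible, whereas the setup for this proposition assumes only irreducibility. Without aperiodicity the limit does not exist, and the step cannot be repaired, because the statement you are proving becomes false. For example, take $n=2m$ with a bipartition $A\cup B$; let each $a\in A$ put weight $1/2$ on a fixed hub $b^*\in B$ and spread the remaining $1/2$ uniformly over $B$, and let each $b\in B$ spread its weight uniformly over $A$. This is irreducible with period $2$ and $c_{b^*}(n)\to 1/4$, so the sequence is not wise; yet for any $P(n)$ with $|P(n)|\le K$ and any single $t$, the rows of $M(n)^t$ indexed by one side are supported on the wrong side of the bipartition relative to $P(n)$ except for weights of order $K/m$, so no uniform $\epsilon$ can work. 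You must therefore add aperiodicity as a hypothesis (it is implicit in the surrounding DeGroot discussion, where consensus already requires it); with it, your Davis--Kahan-free argument via $(M(n)^{t(n)})_{j,i^*(n)}\to c_{i^*(n)}(n)$ is fine, precisely because prominence lets $t$ depend on $n$. Finally, your ``minor bookkeeping issue'' is larger than you allow: a prominent sequence must satisfy the $\epsilon$-bound for \emph{every} $n$, so if unwisdom occurs only along a sparse subsequence (interleave a wise construction on the complementary indices), no prominent sequence in the stated sense exists even though the sequence is not wise. The proposition has to be read subsequentially; you are right that this is a defect of the statement rather than of your argument, but it should be stated as a needed reformulation, not waved off as harmless.
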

Because of the quantification over $t$ in the definition of prominence, the result is useful mainly for quickly ruling out wisdom. \cite{GolubJackson2010} give some interpretable sufficient conditions for wisdom, but sharp and interpretable conditions are not known.

Natural generalizations of the DeGroot dynamic lead to areas with exciting open questions. \cite{CCL} study a class of nonlinear operators $T : \mathbb{R}^n \to \mathbb{R}^n$ generalizing the linear action of Markov matrix $M$ in the DeGroot model. In their terminology, an operator is \emph{robust} if it is entrywise monotone and satisfies $T(x+\gamma \bm{1})=T(x)+\gamma T(\bm{1})$ for every constant $\gamma \in \mathbb{R}$. A rich convergence theory exists for such operators in general spaces, and is surveyed in \cite{CCL,lemmens2012nonlinear}. 

The generalization of the ``wisdom of crowds'' question discussed above is as follows. Consider the long-run map $T^\infty(x):=\lim_{t\to\infty}T^t(x)$, assuming the limit exists. How much can this map depend on any small set of entries of $x$? For example, suppose we take an undirected Erd\H{o}s--Renyi graph on $[n]$ with some edge probability $p(n) \gg \log(n)/n$ as a model of a connected social network. For each degree $d$, fix a symmetric function $\tau_d:\mathbb{R}^d\to\mathbb{R}$, and for each agent $i$ of degree $d_i$, let $T_i(x) = \tau_{d_i}((x_j)_{j \in N(i)})$, where $N(i)$ is the set of $i$'s neighbors. 

If the family $(\tau_d)$ is chosen so that the resulting operator $T$ is robust and has uniformly bounded derivatives, a natural conjecture is that an analogue of wisdom should hold. The intuitive reason is symmetry in large-scale positions: agents' roles are exchangeable, and there seems to be nothing favoring the emergence of globally prominent roles by accident. \cite{CCL} make remarkable progress toward this conjecture under technical assumptions on the second-largest eigenvalue of a matrix reflecting the social network. However, under important models of social networks---sparse Erd\H{o}s--Renyi graphs, stochastic block networks, random geometric graphs with link probabilities depending on spatial proximity---these technical assumptions would not hold.  Better understanding how generalizations of centrality statistics behave in large networks with nonlinear updating behavior is a wide open and exciting problem.

\section*{A Richer Centrality}

Eigenvector centralities were defined based only on $M$, the data of the weighted digraph, with no other sources of status or prestige. But sometimes such sources are relevant. Incorporating them leads to a notion of centrality that extends the eigenvector equation by adding an exogenous term. Specifically, given a positive \emph{decay parameter} $\delta < 1/\rho(M)$ and a vector $z \in \mathbb{R}^n$, the vector of $(\delta,z)$--\emph{Katz--Bonacich centralities} $k$ is defined \cite{Jackson2008} to satisfy
\begin{equation} \label{eq:kb} k^\top = \delta k^\top M  + z^\top \; \Leftrightarrow \; k^\top = z^\top (I-\delta M)^{-1}. \end{equation}
For a quick way to remember this equation, imagine high school students having external sources of status (such as mathematical ability) and their social status coming from a linear combination of this external level and the status derived from in-connections. Below we will see some economic applications that flesh out fuller foundations for this notion of centrality.

\begin{figure}[t]
\centering
\resizebox{0.65\columnwidth}{!}{\begin{tikzpicture}[x=1cm,y=1cm]
    \tikzset{
        eig edge/.style={draw=black!48, line width=0.62pt, line cap=round, line join=round},
        eig node/.style={circle, draw=black!58, fill=green!30, line width=0.48pt, inner sep=0pt},
        walk label/.style={
            font=\fontsize{5.4}{6.0}\selectfont,
            text=black!82,
            fill=white,
            fill opacity=0.94,
            text opacity=1,
            rounded corners=1pt,
            inner xsep=1.8pt,
            inner ysep=1.6pt,
            align=center
        }
    }

    \newcommand{\eignode}[4]{%
        \pgfmathsetmacro{\diam}{0.93*pow(#4,0.5)}%
        \node[eig node, minimum size=\diam cm] (#1) at (#2,#3) {};
    }

    \eignode{n4}{-1.34}{2.10}{0.694598}
    \eignode{n5}{ 0.00}{2.76}{0.881858}
    \eignode{n6}{ 1.34}{2.10}{0.694598}
    \eignode{n3}{ 0.00}{1.12}{1.000000}
    \eignode{n1}{ 0.00}{-0.18}{0.438220}
    \eignode{n2}{ 0.00}{-1.42}{0.187260}
    \eignode{n7}{ 0.00}{-2.70}{0.069118}

    \draw[eig edge] (n3) -- (n4);
    \draw[eig edge] (n3) -- (n5);
    \draw[eig edge] (n3) -- (n6);
    \draw[eig edge] (n4) -- (n5);
    \draw[eig edge] (n5) -- (n6);
    \draw[eig edge] (n3) -- (n1);
    \draw[eig edge] (n1) -- (n2);
    \draw[eig edge] (n2) -- (n7);

    \node[walk label, anchor=east] at (-1.95, 2.14) {\shortstack[c]{$(2,7,17,$\\$51,131)$}};
    \node[walk label, anchor=west] at ( 0.46, 2.93) {\shortstack[c]{$(3,8,23,$\\$62,171)$}};
    \node[walk label, anchor=west] at ( 0.46, 1.14) {\shortstack[c]{$(4,9,28,$\\$69,200)$}};
    \node[walk label, anchor=west] at ( 0.52,-0.18) {\shortstack[c]{$(2,6,12,$\\$36,84)$}};
    \node[walk label, anchor=east] at (-0.52,-1.42) {\shortstack[c]{$(2,3,8,$\\$15,44)$}};
    \node[walk label, anchor=west] at ( 0.22,-2.70) {\shortstack[c]{$(1,2,3,$\\$8,15)$}};
\end{tikzpicture}}
\caption{A seven-node undirected graph with node areas proportional to eigenvector centralities.  Tuples near nodes list the numbers of walks of lengths $1$ through $5$ from each node. The degree-$2$ nodes in the ``head'' are markedly more central than the degree-$2$ nodes on the tail.}
\label{fig:combined}
\end{figure}

The fact that $\rho(\delta M)<1$ guarantees the absolute convergence of the Neumann series expansion $k^\top = z^\top \sum_{t=0}^\infty \delta^t M^t$, which shows that $i$'s Katz--Bonacich centrality is a sum over incoming walks (entries of $(M^t)_{ji}$ for various values of $t$), weighted by $z_j$. In particular, it is uniquely defined. Moreover, as $\delta \rho(M)$ approaches $1$ from below---the boundary at which the Neumann series diverges---we have $$(1-\delta \rho(M))k(\delta) \to y c.$$ In other words, the suitably rescaled Katz--Bonacich centrality converges\footnote{Proving this is a good exercise. A quick guide: reduce to the case where $\rho(M)=1$ and $z^\top \in \Delta_n$, and then conjugate the system by a thoughtfully chosen diagonal matrix to further reduce to the case where $M$ is row-stochastic. Study the system $\widetilde{k}(\delta)^\top = \delta \widetilde{k}(\delta)^\top M  + (1-\delta)z^\top$, noting that $\widetilde{k}(\delta)=(1-\delta)k(\delta)$. Use that $\widetilde{k}(\delta) \in \Delta_n$ to show convergence to a limit, and then think about what that limit could be.} to the eigenvector centrality $c$ of $M$ times a scalar $y$ that depends on $z$. This (along with some reflection) makes it clear that the eigenvector centrality of a node is proportional to the weight of incoming walks of very long lengths, and that dependence on $z$ fades as $\delta \rho(M)$ approaches $1$. Katz--Bonacich centrality will be involved in many of the remaining social-science applications that we will encounter.

Figure \ref{fig:combined} gives an illustration of how centrality works in a small undirected graph. Take $z=\bm{1}$ and compare the degree-$2$ nodes in the dense upper pocket with those in the tail. Even though these nodes have the same ``basic'' endowment of centrality and the same number of length-$1$ walks, those in the head have higher Bonacich centralities because they are incident to more walks of every length. As $\delta$ grows, the longer walks create dramatic differences in these centralities, which converge to eigenvector centralities.

\section*{Games on Networks}

The first of these applications is a class of game-theoretic models where agents' payoffs depend on their own actions and those of their neighbors in a network, developed by \cite{Ballesteretal2006}.

Consider a set of $n$ agents (these are often also called \emph{players} in game theory), each choosing an action $x_i \geq 0$, which can be thought of as a level of investment—say, research effort in a group project. The payoff to agent $i$ is given by:
\begin{equation}
u_i(x_1, \ldots, x_n) = -\frac{1}{2} \gamma_i x_i^2 +  \left( \beta_i + \sum_{j \neq i} G_{ij} x_j \right)x_i.
\label{eq:network_game_payoff}
\end{equation}
The first term represents the cost of effort, and the parameter $\gamma_i>0$ gives the rate at which these costs scale in effort. The other terms represent benefits: $\beta_i > 0$ is an agent-specific \emph{standalone} productivity parameter, and $G_{ij}$ measures the collaborative potential of $i$ and $j$. The project gains in value proportional to the product of their contributions. The proportionality coefficient is the strength of the link between these two agents.

We now introduce an important notion that will serve as our prediction of behavior when strategic agents play this game:
\begin{definition}
A (pure strategy) \emph{Nash equilibrium} of this game is a nonnegative vector $x^* \in \mathbb{R}^n$ such that for each agent $i$,
$$
x_i^* \in \argmax_{x_i \geq 0} u_i(x_i, x_{-i}^*),
$$
where $x_{-i}^*$ denotes the actions of all agents other than $i$.
\end{definition}

The idea behind Nash equilibrium is that each agent is choosing an action (called a \emph{best response}) that maximizes its own payoff, holding other agents' actions fixed.  This notion of stability among rational maximizers is a canonical prediction in games.

Notice that agent $i$'s best-responses do not change if we divide the utility function $u_i$ by the constant $\gamma_i$, so let us do this and define $b_i = \beta_i/\gamma_i$ and $M_{ij} = G_{ij}/\gamma_i$. With that transformation, the Nash equilibrium takes a simple form.

\begin{fact} \label{fact:Nash}
If $\rho(M) < 1$, then there exists a unique Nash equilibrium given by:
\begin{equation} \label{eq:Nash}
x^* = (I - M)^{-1} b.
\end{equation}
\end{fact}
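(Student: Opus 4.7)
The plan is to recognize that each agent's payoff $u_i$ is strictly concave in $x_i$ (with Hessian $-\gamma_i<0$), so agent $i$'s best response is uniquely pinned down by the first-order condition whenever the interior solution is nonnegative, and is $x_i=0$ otherwise. Taking $\partial u_i/\partial x_i=0$ gives $\gamma_i x_i=\beta_i+\sum_{j\neq i}g_{ij}x_j$, which after dividing by $\gamma_i$ and using the definitions $b_i=\beta_i/\gamma_i$ and $m_{ij}=g_{ij}/\gamma_i$ becomes $x_i=b_i+\sum_{j\neq i}m_{ij}x_j$. Stacking these conditions into a single vector equation yields
\[
x=Mx+b,\qquad\text{i.e.,}\qquad (I-M)x=b.
\]

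Next I would handle invertibility and nonnegativity simultaneously via the Neumann series. Because $\rho(M)<1$, the series $\sum_{t=0}^{\infty}M^{t}$ converges and equals $(I-M)^{-1}$; this already shows $(I-M)$ is invertible, so the candidate equilibrium $x^{*}=(I-M)^{-1}b$ is the unique solution of the stacked first-order conditions. Moreover, since $M$ is entrywise nonnegative, every $M^{t}$ is nonnegative, so $(I-M)^{-1}$ is entrywise nonnegative; combined with $b>0$ (which follows from $\beta_i,\gamma_i>0$), this gives $x^{*}\geq 0$, in fact $x^{*}>0$.

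It remains to argue that this $x^{*}$ is actually a Nash equilibrium, not just a solution to an unconstrained system. Here is where one must be a little careful: best responses come from maximizing over $x_i\geq 0$, so a priori the nonnegativity constraint could bind. But because $x_{-i}^{*}\geq 0$ and $b_i>0$, the unconstrained maximizer $b_i+\sum_{j\neq i}m_{ij}x_j^{*}$ of the strictly concave quadratic $u_i(\cdot,x_{-i}^{*})$ is strictly positive, so the nonnegativity constraint is slack and the interior first-order condition fully characterizes the best response. Conversely, any Nash equilibrium $\widetilde{x}\geq 0$ must satisfy the same first-order conditions at every coordinate (either because the interior FOC holds or because a zero coordinate would contradict $b_i>0$), hence solves $(I-M)\widetilde{x}=b$ and equals $x^{*}$ by invertibility. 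The main potential obstacle, verifying that the constraint $x_i\geq 0$ never binds, is thus dispatched in one line by the positivity of $b$ together with the nonnegativity of $M$ and the Neumann expansion.
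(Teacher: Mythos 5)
Your proof is correct and follows essentially the same route as the paper's: derive the best-response map $B(x)=b+Mx$ from the first-order conditions and invert $I-M$ using $\rho(M)<1$ (via the Neumann series). The one thing you add is the explicit verification that the constraint $x_i\geq 0$ never binds---using $b>0$ and the entrywise nonnegativity of $\sum_{t\ge 0}M^t$---a point the paper's short proof leaves implicit.
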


\begin{proof}
Each $x_i$ maximizes $i$'s payoff $u_i$, taking $x_{-i}$ as given. Let $B_i(x_{-i})$ denote the unique best-response action of agent $i$. Taking the derivative of $u_i$ in $x_i$, we obtain
\begin{equation}
B(x) = b + M x. \label{eq:BR}
\end{equation}
Since $\rho(M) < 1$, the matrix $I - M$ is invertible  and the claimed solution follows.
\end{proof}

The result connects the equilibrium of the game to the network structure through the matrix $(I - M)^{-1}$. Indeed, the equilibrium action of each player is its $(1,b)$--Katz--Bonacich centrality in the network $M^\top$.

The condition $\rho(M) < 1$ has a natural interpretation. It ensures that the strategic influences of agents on one another, captured by $G_{ij}$, do not overpower increasing costs of individual effort, captured by $\gamma_i$. (If they did, actions might be inclined to explode.)

In fact, this idea can be made more precise. Rather than looking for an equilibrium, we can model a process of agents strategically adjusting their behavior in response to each other. Doing this in our current context offers a baby example of the theory of \emph{learning (to play equilibria) in games}. Suppose agents start with arbitrary actions $x(0) \in \mathbb{R}^n$ and then, in each period $t=0,1,2,\ldots$, best-respond to previous period actions. Then from \cref{eq:BR} we have the dynamic $$ x(t+1) = b + Mx(t).$$ 

If $\rho(M)<1$, which we will assume unless otherwise indicated, then iterating this dynamic yields a sequence converging to the equilibrium found above. The connection becomes clearer when we write that equilibrium using its Neumann series expansion $
x^* = (I - M)^{-1} b = \sum_{t=0}^\infty M^t b.
$
The series represents the cumulative effect of strategic interactions rippling through the network. Each term $M^t b$ captures the $t$-th order effects: how the action of agent $i$ depends on the exogenous productivity parameters of neighbors at distance $t$. These indirect effects play out as agents best-respond.

If we introduce an extra parameter $\delta$ and let $M= \delta A$ for some fixed matrix $A$, then the equilibrium action of each player becomes its $(\delta,{b})$--Katz--Bonacich centrality in the network $A^\top$. As the strength of strategic effects,  parameterized by $\delta$, grows, longer walks matter in determining equilibrium actions. If these strategic effects become too strong, so that $\rho(M)>1$, then no finite solution exists because the (positive) feedback effects discussed in the previous paragraph blow up.

The connection between Nash equilibria and network centrality measures makes the game-theoretic manifestation of network analysis very clear. An agent's equilibrium action is determined not just by its immediate connections but by its position in the broader network of strategic effects.

We can describe another aspect of our game via a different
application of Bonacich centrality.
Define the \emph{total equilibrium effort} by $X^{*}=\sum_{i=1}^n x_{i}^{*}$. We compute $X^* = \bm{1}^\top (I-M)^{-1} {b}$. 
We define the \emph{keyness} of $i$ by $\kappa_i := \partial X^*/\partial b_i$.
This derivative is the amount by which an exogenous change in $b_i$ affects
the aggregate activity. We immediately observe:
\begin{fact} Agent $i$'s keyness $\kappa_i$ is $i$'s $(1,\bm{1})$--Katz--Bonacich centrality in the network $M$. \end{fact}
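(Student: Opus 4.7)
The plan is to read off keyness directly from the closed-form expression for $X^*$ and then recognize the resulting row vector as a Katz--Bonacich centrality. First I would note that by the formula $X^* = \bm{1}^\top (I-M)^{-1} b$ stated just before the claim, the functional $b \mapsto X^*$ is linear, so its slope in $b_i$ is simply the $i$-th entry of its gradient vector. Taking the gradient with respect to $b$ gives
\begin{equation*}
\kappa^\top = \frac{\partial X^*}{\partial b^\top} = \bm{1}^\top (I-M)^{-1}.
\end{equation*}
Equivalently, $\kappa^\top(I-M) = \bm{1}^\top$, i.e., $\kappa^\top = \kappa^\top M + \bm{1}^\top$.

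Next I would compare this to the defining equation \cref{eq:kb} of $(\delta,z)$--Katz--Bonacich centrality, namely $k^\top = \delta k^\top M + z^\top$. Specializing to $\delta = 1$ and $z = \bm{1}$, the defining equation becomes $k^\top = k^\top M + \bm{1}^\top$, which is exactly the equation satisfied by $\kappa$. The assumption $\rho(M) < 1$ ensures that $\delta = 1$ lies in the admissible range $\delta < 1/\rho(M)$, so $(1,\bm{1})$--Katz--Bonacich centrality in $M$ is well-defined and, by invertibility of $I - M$, uniquely given by $\bm{1}^\top (I-M)^{-1}$. Thus $\kappa$ coincides with this centrality vector, as claimed.

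There is essentially no obstacle here beyond bookkeeping: once the closed-form expression for $X^*$ from \Cref{fact:Nash} and the linear form of the summation $\bm{1}^\top$ are in hand, keyness is a single differentiation, and the identification with Katz--Bonacich centrality is just matching the fixed-point equation with $\delta=1$, $z=\bm{1}$. The only subtlety worth flagging is that here the relevant ``network'' is $M$ itself (not $M^\top$ as in \cref{eq:Nash}), because keyness aggregates downstream influence of $b_i$ through outgoing rather than incoming walks; this is reflected in the transpose structure of the two computations.
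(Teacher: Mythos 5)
Your proof is correct and is exactly the immediate computation the paper intends (the paper states the fact with no proof beyond the preceding formula $X^*=\bm{1}^\top(I-M)^{-1}b$): differentiate in $b$, obtain $\kappa^\top=\bm{1}^\top(I-M)^{-1}$, and match the fixed-point equation $\kappa^\top=\kappa^\top M+\bm{1}^\top$ to \cref{eq:kb} with $\delta=1$, $z=\bm{1}$. Your closing remark correctly accounts for why the relevant network is $M$ here versus $M^\top$ for the equilibrium actions.
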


Despite its simplicity, this model and its close relatives have been a useful lens for examining important practical problems. 
The model sheds light on peer effects in education, examining how a student's effort level depends on the structure of social interactions in the classroom \cite{CalvoArmengolPatacchiniZenou2009}. 
 In the study of criminal networks, where there is considerable evidence of social influence, high keyness identifies criminals whose presence drives a large amount of crime \cite{LPVZ24}. And in models of industrial R\&D investment, it provides guidance on targeting public expenditures to leverage indirect spillover effects through the network of firm collaboration \cite{KonigLiuZenou2019}.

\paragraph{The Welfare Theory of Network Games}

A core insight of game theory is that outcomes arising when agents optimize individually and noncooperatively need not be collectively optimal; the prisoners' dilemma is the standard example. This raises the question of how much value is lost to this inefficiency. If a planner, such as a manager or government, could choose the agents' actions directly, how much more total utility could be achieved?

We now analyze the efficiency of the Nash equilibrium relative to a notion of a socially optimal outcome in our quadratic network game. Spectral methods first developed by \cite{bindel2015bad} turn out to be important for gaining leverage on this question. The \emph{total welfare} in the game is defined as the sum \begin{equation} V(x) = \sum_{i=1}^n u_i(x) \label{eq:welfare_game} \end{equation} of utilities across all agents.  The socially optimal outcome, denoted $x^{\mathrm{eff}}$, is the action vector that maximizes this sum.

To quantify the inefficiency of the Nash equilibrium, we examine the \emph{price of anarchy} (PoA), which measures the worst-case ratio of total welfare at the social optimum to the total welfare at the (unique) Nash equilibrium, with the network fixed and the parameter $b$ ranging over entrywise-positive vectors.

\begin{definition}
The \emph{price of anarchy} is defined as
$
\text{PoA} = \sup_{\{b \in \mathbb{R}^n:\ b_i>0\ \forall i\}} \frac{V(x^{\mathrm{eff}})}{V(x^*)}.
$
\end{definition}

We will work in the model presented above. For this analysis we make two strong assumptions, namely that the cost coefficients $\gamma_i$ in (\ref{eq:network_game_payoff}) are all equal and the induced spillover matrix $M$ is nonnegative and symmetric. These assumptions are restrictive---a point we will return to---but they do help with an illuminating characterization. We will compare the total welfare at the Nash equilibrium $x^*$ and at the socially optimal action vector $x^{\mathrm{eff}}$.
 Our main result characterizes the PoA in terms of the spectral radius of $M$.

\begin{prop}
\label{thm:poa}
Assume $2\rho(M)<1$, where $\rho(M)$ denotes the spectral radius of $M$. Then,
$
\text{PoA} = \frac{(1-\rho(M))^2}{1-2\rho(M)}.
$
\end{prop}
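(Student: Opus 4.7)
The plan is to exploit the symmetric-$M$ and common-$\gamma$ hypotheses to express both $V(x^*)$ and $V(x^{\mathrm{eff}})$ as quadratic forms in $b$ whose defining matrices share an orthonormal eigenbasis (that of $M$), and then reduce the supremum over $b$ to a one-dimensional calculus problem on the spectrum of $M$. After normalizing $\gamma=1$, the welfare becomes $V(x)=-\tfrac{1}{2}\|x\|^2+b^\top x+x^\top M x$, with gradient $-x+b+2Mx$ (using $M^\top=M$) and Hessian $-I+2M$; the Hessian is negative definite precisely because $2\rho(M)<1$, so $V$ has the unique interior maximizer $x^{\mathrm{eff}}=(I-2M)^{-1}b$, which is positive by the Neumann series.

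Next I would evaluate $V$ at both critical points by plugging in their first-order conditions. The Nash best-response relation $x^*=b+Mx^*$ makes the parenthesis in \cref{eq:network_game_payoff} equal to $x_i^*$, so each $u_i(x^*)=\tfrac{1}{2}(x_i^*)^2$, giving $V(x^*)=\tfrac{1}{2}\|x^*\|^2=\tfrac{1}{2}\,b^\top(I-M)^{-2}b$. For the planner's optimum, substituting $b=(I-2M)x^{\mathrm{eff}}$ back into $V$ collapses the $x^\top Mx$ contributions and leaves $V(x^{\mathrm{eff}})=\tfrac{1}{2}\,b^\top x^{\mathrm{eff}}=\tfrac{1}{2}\,b^\top(I-2M)^{-1}b$.

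Now the PoA is a generalized Rayleigh quotient. The symmetry of $M$ lets me orthogonally diagonalize $M=Q\Lambda Q^\top$ and set $\tilde b=Q^\top b$, so both quadratic forms become diagonal in the same basis and
$$\mathrm{PoA}=\sup_{b\in\mathbb{R}_{++}^n}\frac{\sum_i \tilde b_i^2/(1-2\lambda_i)}{\sum_i \tilde b_i^2/(1-\lambda_i)^2},$$
which is a $\tilde b_i^2$-weighted average of the pointwise ratio $f(\lambda_i):=(1-\lambda_i)^2/(1-2\lambda_i)$ and therefore equals $\max_i f(\lambda_i)$, approached by concentrating $\tilde b$ on a maximizing eigenspace. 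A short calculus step---$f'(\lambda)=2\lambda(1-\lambda)/(1-2\lambda)^2$ vanishes only at $\lambda=0$, and an elementary polynomial comparison gives $f(\rho(M))>f(-\rho(M))$---locates the max at the Perron--Frobenius eigenvalue $\rho(M)$. Perron--Frobenius also supplies a strictly positive eigenvector there, so taking $b$ proportional to it keeps $b\in\mathbb{R}_{++}^n$ and attains the supremum, leaving only a routine algebraic simplification to recover the stated expression.

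The conceptual pivot---where symmetry really bites---is recognizing that both welfare levels are quadratic forms in $b$ whose defining matrices $(I-M)^{-2}$ and $(I-2M)^{-1}$ share the eigenvectors of $M$. Without symmetry these are not simultaneously diagonalizable, the sup over $b$ does not collapse to a one-variable problem, and the PoA would not be expected to depend on $M$ only through $\rho(M)$. The subtler secondary step is making sure the worst-case direction lies in the positive orthant, which is exactly what Perron--Frobenius delivers by placing the maximizing eigenvector in $\mathbb{R}_{++}^n$.
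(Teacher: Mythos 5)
Your approach is the paper's approach: orthogonally diagonalize $M$, write both welfare levels as diagonal quadratic forms in $\tilde b = W^\top b$, reduce the supremum over $b$ to maximizing a scalar ratio over the spectrum, and use Perron--Frobenius to place the worst-case direction in the positive orthant. The genuine problem is your final sentence: the ``routine algebraic simplification'' you defer to does not exist. Your (correct) evaluation $V(x^{\mathrm{eff}})=\tfrac12\, b^\top(I-2M)^{-1}b$ leads to
\[
\sup_{b\in\mathbb{R}_{++}^n} \frac{V(x^{\mathrm{eff}})}{V(x^*)}=\max_i\frac{(1-\lambda_i)^2}{1-2\lambda_i}=\frac{\bigl(1-\rho(M)\bigr)^2}{1-2\rho(M)},
\]
which differs from the stated $\left(\frac{1-\rho}{1-2\rho}\right)^2$ whenever $\rho(M)>0$. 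Carried honestly to its end, your argument proves a different formula than the proposition asserts, so as a proof of the statement it fails at the last step.

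The source of the mismatch is the paper's own sketch, which claims $V(x^{\mathrm{eff}})=\tfrac12\sum_i \tilde b_i^2/(1-2\lambda_i)^2$ ``by a similar calculation'' to the Nash case. That calculation rests on the identity $u_i=\tfrac12 x_i^2$, which follows from the best-response condition $x=Mx+b$ and therefore holds at $x^*$ but not at $x^{\mathrm{eff}}$, where the first-order condition is $x=2Mx+b$ and instead $u_i(x^{\mathrm{eff}})=\tfrac12 b_i x_i^{\mathrm{eff}}$ --- exactly your $\tfrac12\, b^\top(I-2M)^{-1}b$. A two-agent check with $m_{12}=m_{21}=1/4$ and $b=\mathbf{1}$ confirms your version: $x^*=(4/3,4/3)$ gives $V(x^*)=16/9$, $x^{\mathrm{eff}}=(2,2)$ gives $V(x^{\mathrm{eff}})=2$, and the ratio is $9/8=(1-\rho)^2/(1-2\rho)$, not the $9/4$ predicted by the proposition (and one checks the other eigendirection does worse, so $9/8$ is the actual PoA here). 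You should therefore not assert that your derivation recovers the stated expression; rather, flag that the exponent on $(1-2\rho)$ in the proposition appears to be $1$, not $2$. Everything else in your write-up --- the Hessian argument for why $2\rho(M)<1$ makes the planner's problem strictly concave, the positivity of $x^{\mathrm{eff}}$ via the Neumann series, and the comparison $f(\rho)>f(-\rho)$ locating the maximum at the Perron root --- is correct and is more careful than the paper's sketch on exactly the point where the sketch goes wrong.
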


\begin{proof}[Sketch]

Diagonalize $M$ as $M = W \Lambda W^\top$, where $\Lambda = \text{diag}(\lambda_1, \ldots, \lambda_n)$ and $W$ is an orthogonal matrix whose columns are eigenvectors of $M$. Let $\tilde{x} = W^\top x$ and $\tilde{b} = W^\top b$. The Nash equilibrium and socially efficient actions can be written as
$
\tilde{x}_\ell^* = \frac{\tilde{b}_\ell}{1 -  \lambda_\ell}, \quad \tilde{x}_\ell^{\mathrm{eff}} = \frac{\tilde{b}_\ell}{1 - 2 \lambda_\ell}.
$ The first follows from rewriting \Cref{fact:Nash} in the diagonal basis, and the second follows by solving a very similar system of equations corresponding to the first-order conditions for maximizing $V(x)$.

The total welfare at the Nash equilibrium is
\begin{equation}
V(x^*) = \frac{1}{2} \sum_{\ell=1}^n \frac{\tilde{b}_\ell^2}{(1 -  \lambda_\ell)^2}.  \end{equation} This is obtained by plugging in the condition that all agents are best-responding, $x^*=Mx^* + b$, into agents' utility functions from \cref{eq:network_game_payoff}, noting that in the present case $\beta=b$. This shows that $u_i = \frac{1}{2}(x_i^*)^2$, so that utilitarian welfare is $\frac{1}{2} (x^*)^\top x^*$. We then plug in the above characterization $\tilde{x}_\ell^* = \frac{\tilde{b}_\ell}{1 -  \lambda_\ell}$ in the diagonalized basis. By a similar calculation,
\begin{equation}
V(x^{\mathrm{eff}}) = \frac{1}{2}\, b^\top (I-2M)^{-1} b
 \;=\; \frac{1}{2} \sum_{\ell=1}^n \frac{\tilde{b}_\ell^2}{1 - 2 \lambda_\ell}.
\end{equation}

Thus, when $b$ is proportional to an eigenvector of $M$ with eigenvalue $\lambda$, the welfare ratio is
$\frac{V(x^{\mathrm{eff}})}{V(x^*)}=\frac{(1-\lambda)^2}{1-2\lambda}$, which is increasing in $\lambda$ on $[0,1/2)$. The PoA is therefore maximized when all weight is placed on the eigenvector corresponding to $\lambda_{\max} = \rho(M)$, yielding the result.
\end{proof}

This result reveals that the inefficiency of the Nash equilibrium increases with the network's spectral radius $\rho(M)$, a measure of strategic interaction strength. As this number approaches the maximum value where the social optimum is well-defined, the PoA grows unbounded. Moreover, the instances achieving worst-case efficiency loss are those where basic incentives are proportional to the first eigenvector of the interaction network.

The PoA in network games illuminates the impact of network structure on the efficiency of decentralized outcomes. In networks with large spectral radii, individual actions have amplified effects on others, leading to greater divergence between individual incentives and social welfare. 

\paragraph{Toward Open Questions}

The most obvious open question is what can be said about the price of anarchy when we dispense with the strong assumptions that $\gamma_i$ is constant across $i$ and $M$ is symmetric. Both impose symmetries that are neither innocuous nor likely to hold in most applications. Regarding the symmetry of $M$, it is easy to think of cases where $i$'s effort makes a big difference to $j$'s incentives but this is not so when their roles are reversed. The assumption that $\gamma_i=1$ seems even more troubling; while it can be achieved by multiplying $u_i$ by a scalar, this rescaling necessarily changes how the welfare of $i$ enters the social welfare function. The techniques used above do not allow us to vary $\gamma_i$ flexibly while still characterizing properties of a \emph{fixed} social welfare function. Can we relax these restrictive assumptions and examine generalizations of the insight that the gap between equilibrium and optimal outcomes can be bounded in terms of the strength of strategic effects? 

Incomplete information opens up another set of questions to explore. We have assumed that all the parameters of agents' utility functions---$\gamma_i$, $b_i$, $G_{ij}$---are known exactly by the agents. This is unrealistic. There are standard Bayesian models of   agents reasoning correctly about each other's parameters, each other's uncertainty about those parameters, etc. The linear structure of our basic environment works nicely with incomplete information, and understanding the welfare theory of such models would be valuable.

A final direction in which there are several active research programs but also many opportunities is the study of network interventions. The inefficiency of equilibria raises the obvious question of what to do about it. Several recent papers look at interventions in network games and dynamic processes (akin to our best-response dynamic) \cite{gaitonde2020adversarial,GaleottiGolubGoyal2020,JeongShin2024}. Many of these papers work with a similar level of tractability as the one we have been analyzing, but with different applications or interpretations---e.g., models of dynamic coordination on common opinions, rather than collaborative production.

\section*{Public Goods}

The results just presented rely on assumptions imposing specific, global functional forms on the utility functions $u_i$. These assumptions are useful for illustrating certain phenomena sharply, but they are very unrealistic.

It turns out centrality theory has more fundamental manifestations, which do not rely on such assumptions. This section illustrates this and also introduces some new economic ideas, building on our analysis of welfare and efficiency above.

Consider a set of $n$ agents, each choosing an action $x_i \geq 0$. The utility of agent $i$ is given by a strictly concave\footnote{For all $x,x'$ and all $\alpha \in (0,1)$, we have $u_i(\alpha x +(1-\alpha)x')>\alpha u_i(x) + (1-\alpha)u_i(x')$.} and continuously differentiable function $u_i(x_1, \ldots, x_n)$. We make the following assumptions:
\begin{enumerate}
    \item \emph{Costly actions}: Utility is decreasing in one's own action, holding others' actions fixed: $\frac{\partial u_i}{\partial x_i} < 0$ for all $i$.
    \item \emph{External benefits}: Utility is nondecreasing in any other player's action: $\frac{\partial u_i}{\partial x_j} \geq 0$ for all $i \neq j$.
\end{enumerate} 
The first assumption means that, holding others' actions fixed, agents would always strictly prefer to reduce their own actions.\footnote{We can instantiate this with polynomials similar to those studied above, with a cost term of the form $-\gamma_i(x_i+c_i)^2$, making the costs of action large enough at $0$. This means that the only Nash equilibrium is $x=0$.} The second means that each agent's action is at worst neutral for others.  If $x_i$ represents something like mitigating environmental harms or investing in technologies that produce clean energy, a Nash equilibrium outcome can be a tragedy: actions that are unilaterally best responses yield outcomes worse---perhaps much worse---than some other available outcomes. Our main question of this section is: if agents want to improve on the non-cooperative status quo of $0$---e.g., by negotiating a like-for-like agreement in which they commit to all provide more effort---when is this possible?

To discuss improvements to the $x=0$ status quo, we introduce some welfare theory that goes beyond the simple utilitarian notion we studied via $V(x)$ above.

\begin{definition}
A nonnegative action vector $x \in \mathbb{R}^n$ is \emph{Pareto efficient} if there is no nonnegative $y \in \mathbb{R}^n$ such that $u_i(y) \geq u_i(x)$ for all $i$ and $u_j(y) > u_j(x)$ for some $j$. 
\end{definition}

Pareto inefficient outcomes are ones such that some agents can be made strictly better off without making anyone worse off. Pareto efficiency of $x$ means that there is someone who can object to the shift to any $y$ on the grounds that it makes that agent worse off (though $y$ may be, in some reasonable sense, an improvement).
It is a fact---worth convincing oneself of---that under our assumption that each $u_i$ is strictly concave, all Pareto efficient outcomes can be obtained as maximizers of $\sum_i \theta_i u_i(x)$, where $\theta_i$ are some nonnegative numbers. As the vector $\theta$ ranges over the interior of the simplex $\Delta_n$, we trace out a manifold of codimension $1$ in $ \mathbb{R}^n$ consisting of values of $x$ solving this optimization problem.

The main point of this section is that this manifold has an interesting description in spectral terms, first suggested by \cite{ghosh2008charity} and developed by \cite{elliott2019network}. Define the \emph{benefits matrix} $B(x)$ as
$$
B_{ij}(x) = \frac{{\partial u_i}/{\partial x_j}}{-{\partial u_i}/{\partial x_i}} \quad \text{for } i \neq j, \quad B_{ii}(x) = 0.
$$
Each entry $B_{ij}(x)$ represents the rate at which agent $i$ is willing to substitute a decrease in its own action for an increase in agent $j$'s action: i.e., the number of units of own action that agent $i$ would be just indifferent to giving in order to obtain one unit of $j$'s effort. We will assume that $B(x)$ is irreducible for every $x$, which means that it is impossible to partition the agents into two distinct sets, one of which does not care about the other's contributions.

We now have the following result.
\begin{prop}\label{prop:pareto}
An entrywise positive vector $x$ is Pareto efficient if and only if the spectral radius of $B(x)$ is equal to $1$. Moreover, under the normalization $\frac{\partial u_i}{\partial x_i} = -1$, the left-hand eigenvector centralities of $B(x)$ correspond to the {Pareto weights} $\theta_i$ such that $x$ maximizes $\sum_{i} \theta_i u_i(x)$.
\end{prop}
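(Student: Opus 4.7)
The plan is to recognize the first-order conditions for Pareto efficiency as a left-eigenvector equation for $B(x)$ and then apply Perron--Frobenius. I begin from the fact cited just before the proposition: under strict concavity, an entrywise positive $x$ is Pareto efficient if and only if it maximizes $\sum_i \theta_i u_i(y)$ for some $\theta$ in the interior of $\Delta_n$. Because $x$ lies in the interior of the feasible set $\mathbb{R}_{\geq 0}^n$, this is equivalent to the unconstrained first-order condition
$$\sum_i \theta_i \frac{\partial u_i}{\partial x_j}(x) = 0 \qquad \text{for every } j,$$
with sufficiency in the reverse direction following from strict concavity of the positive combination $\sum_i \theta_i u_i$.

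The key algebraic step is to separate the diagonal term. The first-order condition rearranges to
$$\theta_j \left(-\frac{\partial u_j}{\partial x_j}\right) \;=\; \sum_{i \neq j} \theta_i \frac{\partial u_i}{\partial x_j} \;=\; \sum_{i \neq j} \theta_i \left(-\frac{\partial u_i}{\partial x_i}\right) b_{ij}(x),$$
where the second equality uses the definition of $b_{ij}(x)$. Defining rescaled weights $c_i := \theta_i\bigl(-\partial u_i/\partial x_i\bigr) > 0$ turns this into $c_j = \sum_i c_i\, b_{ij}(x)$, equivalently $c^\top = c^\top B(x)$. Under the stated normalization $\partial u_i/\partial x_i = -1$ the rescaling is trivial and $c_i = \theta_i$ exactly.

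The conclusion follows from Perron--Frobenius applied to the irreducible nonnegative matrix $B(x)$. In the ``only if'' direction, interior $\theta$ and $-\partial u_i/\partial x_i > 0$ make $c$ strictly positive; by the third item of Perron--Frobenius, a positive left eigenvector of an irreducible nonnegative matrix must correspond to the spectral radius, so $\rho(B(x)) = 1$ and $c^\top$ (hence the Pareto weights, under the stated normalization) is the left-hand eigenvector centrality of $B(x)$. In the ``if'' direction, $\rho(B(x)) = 1$ together with irreducibility yields a positive left Perron vector $c^\top$; setting $\theta_i := c_i/(-\partial u_i/\partial x_i) > 0$ produces interior Pareto weights that satisfy the first-order conditions, and strict concavity promotes the resulting critical point to a global maximum of $\sum_i \theta_i u_i$, so $x$ is Pareto efficient. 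The main points to watch are purely bookkeeping: the hypothesis $x>0$ is what lets one replace general KKT conditions by the clean unconstrained gradient equation, irreducibility of $B(x)$ is what delivers uniqueness and positivity of the Perron vector, and strict concavity of each $u_i$ is what elevates first-order conditions to a global optimum. All three ingredients are built into the hypotheses, so there is no deep obstacle beyond the rearrangement above.
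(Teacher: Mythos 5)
Your proof is correct, and the ``if'' direction ($\rho(B(x))=1$ implies Pareto efficiency) coincides with the paper's: both take the positive left Perron vector $\theta$, observe that $\theta^\top B(x)=\theta^\top$ is exactly the first-order condition $\theta^\top D(x)=0$ for maximizing $\sum_i\theta_i u_i$, and invoke concavity. Where you genuinely diverge is the ``only if'' direction. The paper never passes through the scalarization theorem there: it works with the \emph{right} Perron eigenvector $c$ of $B(x)$ and shows directly that when $\rho>1$ the perturbation $x\mapsto x+\varepsilon c$ is a Pareto improvement (and $x\mapsto x-\varepsilon c$ when $\rho<1$), so any Pareto efficient point must have $\rho=1$. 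You instead invoke the supporting-hyperplane fact that a Pareto efficient $x$ maximizes $\sum_i\theta_i u_i$ for some weights, read the first-order condition as the left-eigenvector equation $c^\top=c^\top B(x)$ with $c_i=\theta_i(-\partial u_i/\partial x_i)$, and let item (3) of Perron--Frobenius force the eigenvalue $1$ to equal $\rho$. Your route is shorter and derives both halves of the proposition (the spectral-radius criterion and the identification of the Pareto weights with the left eigenvector centralities) from one equation, but it leans on the scalarization fact, which the paper only asserts informally and which in its standard form yields merely \emph{nonnegative} weights rather than the interior $\theta$ you assume; that gap is harmless here precisely because item (3) of Perron--Frobenius upgrades any nonzero nonnegative left eigenvector to a positive multiple of the Perron vector, but you should say so rather than assume interiority. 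The paper's route, by contrast, is constructive and carries the economic content that the right Perron vector prescribes the relative contribution sizes in a mutually beneficial ``pass the benefits forward'' adjustment, which is the interpretation the surrounding text emphasizes.
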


\begin{proof}[Proof sketch] If the spectral radius is greater than $1$, a Pareto improvement can be constructed in which one agent increases its action, generating benefits for others; then others ``pass forward'' some of the benefits they receive by increasing their own actions.

Fix any nonnegative vector ${x} \neq 0$ (which we will often suppress as an argument) and let  $\rho$ denote the spectral radius of ${B}({x})$. Then by the Perron--Frobenius theorem and the maintained assumptions, there is a nonnegative vector $c \neq 0$ such that ${B}c=\rho c$. Let ${D}$ be the Jacobian of $u$ in $x$ evaluated at the outcome of interest, with entries $D_{ij}$. Multiplying each row of this system by $-D_{ii}$,
$$  \sum_{j \neq i} \tfrac{\partial u_i}{\partial x_j} c_j + \rho \tfrac{\partial u_i}{\partial x_i} c_i = 0 \quad \forall i.$$
If $\rho>1$, then using the assumption $\frac{\partial u_i}{\partial x_i}<0$ we deduce
\begin{equation}   \sum_{j \neq i} \tfrac{\partial u_i}{\partial x_j} c_j + \tfrac{\partial u_i}{\partial x_i} c_i > 0 \quad \forall i,\label{pareto-inequality}\end{equation}
showing that a slight change where each $i$ increases its action by $c_i$ yields a Pareto improvement. The vector ${c}$ describes the relative magnitudes of contributions needed to achieve a Pareto improvement. The positivity of ${c}$ is key. The conditions of the Perron--Frobenius theorem guarantee the positivity of ${c}$. If $\rho<1$, we reason similarly by using the direction $-c$, where $c$ is the Perron vector of ${B}$---i.e., each $i$ slightly \emph{decreases} its action by $c_i$. Thus Pareto efficiency implies $\rho({B}({x}))=1$.

Conversely, if $\rho({B}(x))=1$ then $x$ is Pareto efficient. By Perron--Frobenius, if $\rho(B(x))=1$ there is a \emph{left}-hand eigenvector ${\theta}$ of $B(x)$, with all positive entries, satisfying  ${\theta}{B}(x) = {\theta}$. Under the $D_{ii}=-1$ normalization, this is equivalent to ${\theta} D(x)=0$, the first-order conditions for $x$ to maximize $ \sum_i \theta_i u_i(x)$. Since these conditions hold and the maximization problem is concave, it follows that $x$ is Pareto efficient. \end{proof}

The above argument also shows that whenever $x$ is Pareto efficient, the vector ${\theta}$ of left-hand eigenvector centralities of ${B}({x})$ is such that ${x}$ maximizes $ \sum_i \theta_i u_i({x})$. Intuitively,  $\theta_i = \sum_j \theta_j B_{ji}$ says $i$'s weight (proportional to the planner's disutility of that agent's costs) equals the total benefits it can confer on others, weighted by their $\theta_j$; if this were not so, the planner would want to change $x_i$. 

\paragraph{Essential agents}

Are there agents essential to negotiations in our setting and, if so, how can we identify them? Suppose a given agent may be exogenously unable to take any action other than $x_i=0$. How much does such an exclusion hurt the prospects for voluntary cooperation by the other agents?

Without agent $i$, the benefits matrix at the status quo of ${0}$ is obtained from ${B}({0})$ by setting all entries in row $i$ and column $i$ to zero. Call that matrix ${B}^{[-i]}({0})$. Its spectral radius is no greater than that of ${B}({0})$. The most dramatic case is one in which the spectral radius of ${B}({0})$ exceeds $1$ but the spectral radius of ${B}^{[-i]}({0})$ is less than $1$. Then by Proposition \ref{prop:pareto}, a Pareto improvement on ${0}$ exists when $i$ is present but not when $i$ is absent. To illustrate, consider the example in \Cref{fig:three-cycle-2}. Agent $4$, even though it confers the smallest marginal benefits, is the only essential agent. Without this agent, there are no cycles at all, so $\rho({B}^{[-4]}({0}))=0$. On the other hand, when 4 is present but any one other agent ($i \neq 4$) is absent, then there is a cycle whose edges multiply to more than $1$, and the spectral radius of ${B}^{[-i]}({0})$ exceeds $1$. Thus, a seemingly ``small'' agent can be essential to improving on the status quo when that agent completes cycles in the benefits network.

\begin{figure}[t]
\centering
\begin{tikzpicture}
[scale=0.6, every node/.style={transform shape}]
    \SetVertexNormal[Shape      = circle,
    FillColor = white,
    LineWidth  = 1.5pt]
    \SetUpEdge[lw         = 1.5pt,
    color      = black,
    labelcolor = white]

    \tikzset{node distance = 1.6in}

    \tikzset{VertexStyle/.append  style={fill}}
    \Vertex[x=0,y=0]{1}
    \Vertex[x=3,y=-4]{2}
    \Vertex[x=-3,y=-4]{3}
    \Vertex[x=0,y=-2.3]{4}
    \tikzset{EdgeStyle/.style={->}}
    \Edge[label=5](1)(2)
    \Edge[label=6](3)(2)
    \Edge[label=7](3)(1)
    \tikzset{EdgeStyle/.style={<->}}
    \Edge[label=0.5](1)(4)
    \Edge[label=0.5](4)(3)
    \Edge[label=0.5](4)(2)
\end{tikzpicture}
\caption{A benefits matrix ${B}({0})$ and its graphical depiction, in which player \#4 is essential despite providing smaller benefits than the others.}
\label{fig:three-cycle-2}
\end{figure}

\paragraph{Spectral Radius in Terms of Cycles}

Gelfand's formula articulates the role cycles play in Pareto improvements.

\begin{fact}\label{fact:cycles}\label{fact:monotonic}
If ${M}$ is a square nonnegative matrix, then $\rho({M})$ is equal to $\limsup_{t \to \infty} \trace \left({M}^t\right)^{1/t}$.

\end{fact}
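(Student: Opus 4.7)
The plan is to rewrite $\trace(M^t)$ as a sum of powers of the eigenvalues of $M$ and then squeeze $\limsup_t \trace(M^t)^{1/t}$ between $\rho := \rho(M)$ from both sides. Schur triangularization together with the similarity-invariance of trace yields
\[
\trace(M^t) = \sum_{i=1}^n \lambda_i^t,
\]
where $\lambda_1, \ldots, \lambda_n$ are the eigenvalues of $M$ with algebraic multiplicity. The upper bound is then immediate: $|\lambda_i| \le \rho$ for every $i$ gives $0 \le \trace(M^t) \le n\rho^t$, hence $\limsup_t \trace(M^t)^{1/t} \le \rho$.

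For the matching lower bound I would first arrange for $\rho$ to be an eigenvalue of $M$. The Perron--Frobenius Theorem as stated in the excerpt requires irreducibility, but the general nonnegative case follows by applying it to the positive (hence irreducible) perturbation $M + \varepsilon J$, where $J$ is the all-ones matrix, and sending $\varepsilon \downarrow 0$: by continuity of the spectrum the corresponding Perron eigenvalues converge to some eigenvalue of $M$, which must equal $\rho$. Now list the eigenvalues on the spectral circle as $\lambda_j = \rho e^{i\phi_j}$ for $j=1,\ldots,k$ (with $k \ge 1$), and let $\rho' < \rho$ bound the moduli of the remaining eigenvalues. Splitting off the dominant terms,
\[
\trace(M^t) = \rho^t \sum_{j=1}^k e^{it\phi_j} \;+\; R(t), \qquad |R(t)| \le n(\rho')^t.
\]

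The main obstacle is that the inner sum can vanish for many $t$---this is exactly what happens in the periodic irreducible case, where the $\lambda_j/\rho$ are nontrivial roots of unity whose powers average out over a period. To rescue the argument I would extract a subsequence along which all of the phases realign, using the standard Diophantine pigeonhole on the torus $(\mathbb{R}/2\pi\mathbb{Z})^k$: for every $\varepsilon > 0$ there exist arbitrarily large positive integers $t$ with $|e^{it\phi_j} - 1| < \varepsilon$ for all $j \le k$ simultaneously. Along such a subsequence, $\trace(M^t) \ge k(1-\varepsilon)\rho^t - n(\rho')^t$; taking $t$-th roots and then letting $\varepsilon \downarrow 0$ gives $\limsup_t \trace(M^t)^{1/t} \ge \rho$, and combined with the upper bound this is the desired equality.
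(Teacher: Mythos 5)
Your argument is correct, and note that the paper states this fact without proof (it is presented as a known consequence of Gelfand's formula), so yours is a self-contained derivation rather than a variant of anything in the text. The essential content is exactly where you put it: the upper bound is immediate from $\trace(M^t)=\sum_i \lambda_i^t$ and $0 \le \trace(M^t) \le n\rho^t$, and the whole difficulty is that the peripheral eigenvalues can cancel for most $t$ (e.g.\ $M=\left(\begin{smallmatrix}0&1\\1&0\end{smallmatrix}\right)$, where $\trace(M^t)$ vanishes for all odd $t$), which is why the statement carries a $\limsup$; your simultaneous-recurrence step is the right fix and legitimately produces a subsequence along which $\trace(M^t)\ge k(1-\varepsilon)\rho^t - n(\rho')^t$, using that the trace is real so a lower bound on the real part of the peripheral sum suffices. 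Three small remarks. First, the Perron--Frobenius detour via $M+\varepsilon J$ is unnecessary: by the definition of the spectral radius as the maximum modulus of the eigenvalues, there is always at least one eigenvalue with $|\lambda|=\rho$, and your argument only needs $k\ge 1$, not that $\rho$ itself is an eigenvalue. Second, the claim that the good times $t$ can be taken arbitrarily large deserves one line, since a single Dirichlet return time is bounded; for instance, apply the pigeonhole at accuracy $\varepsilon/m$ to get some return time $d\ge 1$, and then $md$ is a return time at accuracy $\varepsilon$ of size at least $m$. Third, the degenerate case $\rho=0$ (nilpotent $M$) should be disposed of separately, since there is then no $\rho'<\rho$ and no phases to speak of; but there $\trace(M^t)=0$ for all $t\ge 1$ and the identity is immediate. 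With those touches the proof is complete.
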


For a directed, weighted graph with adjacency matrix ${M}$, the quantity $\trace \left({M}^t\right)$ measures the total weight of all closed walks of length $t$, where each walk is weighted by the product of its edge weights.

The essential agents are those present in sufficiently many high-value cycles. Relatedly, a single weak link dramatically reduces a cycle's value. Thus networks with an imbalanced structure, in which it is rare for those agents who could confer large marginal benefits on others to be the beneficiaries of others' efforts, will have a lower spectral radius and there will be less scope for cooperation.

\section*{Markets and Imperfect Measurement}

Imagine one of the models we have presented actually describes an economic situation. How can an analyst use these models without direct access to data such as the matrices $M$ or $B$? Imperfect inference of these objects is the best we can hope for. This section presents a response to this challenge in the context of a network game from the theory of firm behavior, following \cite{GaleottiGolubGoyalTalamasTamuz2026}. In addition to being a point of entry into statistical questions, this application illustrates how network game theory applies to a case with traditional economic ingredients like prices and quantities. 

The agents are \( n \) firms selling distinct goods.  Each chooses an action $x_i$, the price of its good. Quantities sold as a function of prices are $q(x)=q^0+Mx$, where $q^0 \in \mathbb{R}^n$. Here $M$ is an $n$-by-$n$ matrix satisfying $M_{ii}=-1$ for all $i$---which, as it turns out, is just a normalization. The entries of $M$ capture relationships between the goods. If $M_{ij}>0$ then $j$ is called a \emph{substitute} to $i$, since more of $i$ is consumed when $j$ becomes more expensive. If, on the other hand, $M_{ij}<0$, then $j$ is called a \emph{complement} to $i$. Under some standard microeconomic assumptions, $M$ is negative semidefinite. 

Firm $i$'s profit---which it seeks to maximize---is $ u_i(x) = q_i(x)(x_i-c_i),$ where $c_i$ is the firm's per-unit production cost. The relationships in $M$ create strategic spillovers in pricing: one firm's pricing affects others' optimization problems. Solving for the Nash equilibrium of the game yields the following analogue of \cref{eq:Nash}: when costs are perturbed by $\dot{c}$, equilibrium prices are perturbed by $$ \dot{x} = (I-M)^{-1} \dot{c}.$$ 

An economically important feature of this model is that \emph{market outcomes are Pareto inefficient}: perturbing the market outcome can yield an improvement in social welfare.\footnote{The basic idea is that when firms set prices, they do not have an incentive to focus on economic surplus that is not part of their profit. As a result, they can set ``socially inefficient'' prices, typically higher than a utilitarian planner would.} We thus introduce an authority that can influence the game by choosing an \emph{intervention}, a vector \( \sigma  \in \mathbb{R}^n\), which grants a subsidy to firms---changing  their costs by $\dot{c}=-\sigma$ per unit at an expenditure equal to the dot product $\sigma \cdot q$, where $q$ is the post-intervention quantity vector. (Negative entries in $\sigma$ are permitted, corresponding to a tax.)  The effect of the intervention on \emph{economic surplus}---a measure of welfare inclusive of effects on firms, consumers, and the authority's budget---turns out to be
\begin{equation}
V(\sigma) = -(q^0)^\top (I - M)^{-1} M \sigma, \label{eq:welfare_market}
\end{equation} where $q^0 \in \mathbb{R}^n$ is a given vector of pre-intervention quantities. This follows by some standard  calculations in the welfare theory of markets, which are spelled out in \cite{GaleottiGolubGoyalTalamasTamuz2026}; we will take this more elaborate analogue of \cref{eq:welfare_game} for granted. We will also take for granted the following fact: for generic data $(M, q^0)$, there are interventions such that $V(\sigma)>0$: there are subsidy-and-tax plans that generate more money than they cost.

However, this improvement requires information to implement; can it be done in practice?  Economic statisticians have developed tools to estimate the cross-firm demand effects comprising $M$, but these estimates are imperfect. Treating $q^0$ as known, suppose the authority observes $$ \widehat{M} = M + E,$$ where $E$ is a symmetric random matrix that represents estimation error, whose (above-diagonal) entries are i.i.d. with variance bounded by a constant. The challenge is that interventions cause ripple effects in this network game, captured by $(I-M)^{-1}$ in \cref{eq:welfare_market}. These effects can be quite complex, and it is unwise to substitute a noisy proxy for $M$ and take the results at face value. Instead, we suppose $(M,q^0)$ lies in some known set $\mathfrak{P}$ and define a robust intervention as one that, very probably, generates a non-vanishing amount of new social surplus per dollar spent.\footnote{We can also require good expected performance without changing the results.}
\begin{definition} \label{def:robust_intervention} An $\epsilon$-\emph{robust intervention rule} for $\mathfrak{P}$ is a function $\sigma(\widehat{M})$ so that, for all $(M,q^0) \in \mathfrak{P}$, we have
\[
\frac{V(\sigma(\widehat{M}))}{\sigma \cdot q^0} \geq \epsilon
\]
with probability at least $1-\epsilon$. \end{definition}

Note the randomness is only in the draw of $E$. 

We will give conditions under which robust interventions exist. Define the subspace \( \mathcal{L}(M, \mu) \subseteq \mathbb{R}^n \) as the span of eigenvectors of \( M \) corresponding to eigenvalues \( \lambda \) with \( |\lambda| \geq \mu \).
\begin{definition} The pair $(M,q^0)$ has \( (\mu, \delta) \)-\emph{recoverable structure} if the projection of \( q^0 \) onto \( \mathcal{L}(M, \mu) \)  has norm at least $\delta$.\end{definition} This condition  requires that $M$ have some eigenvalues of absolute value at least $\mu$, and that $q^0$ projects nonvanishingly onto their eigenspaces.
Then we have:
\begin{prop} \label{prop:GGGTT}
  Fix a sequence $\mu(n) \in \omega(\sqrt{n})$. If all $ (M,q^0) \in \mathfrak{P}$ have  \( (\mu(n), \delta) \)-recoverable structure for some $\delta>0$, then there is an $\epsilon >0$ so that $\epsilon$-robust intervention rules exist for all large enough $n$.  
\end{prop}

The significance of the condition on $\mu(n)$, namely that $\mu(n)/\sqrt{n} \to \infty$, is related to the fact that the operator norm of $E$ is $O(\sqrt{n})$ by a standard result on Wigner matrices. The condition then entails that all possible markets in $\mathfrak{P}$ have eigenvalues of absolute value larger than the norm of the noise. We now sketch how this is used to intervene robustly.

As in the analysis of the price of anarchy, let us diagonalize the matrix \( M \), writing
$M = W \Lambda W^\top,$ where \( W \) is orthogonal with columns $w^\ell$ and \( \Lambda \) is diagonal with entries \( \lambda_\ell \). It turns out that we can write the effect on welfare of an intervention as \begin{equation} \label{eq:V_market_spectral}
V(\sigma) = \sum_{\ell=1}^n \alpha_\ell \beta_\ell \frac{-\lambda_\ell}{1 - \lambda_\ell},
\end{equation}
where \( \sigma = \sum_{\ell=1}^n \alpha_\ell w^\ell \) and  \( q^0 = \sum_{\ell=1}^n \beta_\ell w^\ell \). 
The Davis--Kahan theorem then gives a way of controlling some terms in this sum even though we do not know $M$ or its eigenvectors $W$.
\begin{thmNoNum}[Davis--Kahan]
Let $M$ and $\widehat{M} = M + E$ be $n \times n$ symmetric matrices with eigenvalues $(\lambda_\ell)_{\ell=1}^n$ and $(\widehat{\lambda}_\ell)_{\ell=1}^n$, with each sequence weakly decreasing in absolute value. Let $W_k$ be the $n \times k$ matrix having orthonormal columns, with column $\ell$ being an eigenvector corresponding to $\lambda_\ell$, and define $\widehat{W}_k$ analogously. Then
\[
\Vert \widehat{W}_k \widehat{W}_k^\top - W_k W_k^\top \Vert_2 \leq \frac{2 \Vert E \Vert_2}{\delta_k}, 
\]
where $\Vert \cdot \Vert_2$ denotes the operator norm (largest absolute eigenvalue) and $\delta_k := \min_{i \leq k, j > k} |\lambda_i - \lambda_j|$ is called the eigenvalue gap.
\end{thmNoNum}

This result implies that if the eigenvalue gap $\delta_k$  is large compared to the noise norm $\Vert E \Vert_2$, then the subspace spanned by the leading $k$ eigenvectors of $\widehat{M}$ (columns of $\widehat{W}_k$) is close to the corresponding subspace for $M$ (spanned by the columns of $W_k$).
A simple special case occurs when the largest eigenvalue $\lambda_1$ is separated from the second-largest by a sufficiently large gap---much larger than $\sqrt{n}$ (see \Cref{fig:sampling_figure} for an example). Then the Davis--Kahan theorem simply says that the eigenvector $\widehat{w}^1$ is very close to $w^1$. This permits the authority to effectively control the first summand of \cref{eq:V_market_spectral}, while keeping all the other summands nearly zero, by choosing interventions that project only onto $w^1$. That is the essence of the ``robust intervention'' strategy.

Intuitively, the noise in $\widehat{M}$ means that many specific spillovers cannot be known precisely. But the identification of some large-eigenvalue eigenspaces permits the detection and estimation of latent patterns in the interactions that have a strong impact on demand responses $q(x)=q^0+Mx$. This turns out to be enough for designing good interventions. For illustration, \Cref{fig:sampling_figure} shows a specific demand matrix \(M\) and its absolute eigenvalues. One eigenspace stands out as much stronger in the spectral decomposition than others. Economically, it captures the global complementarities. The eigenvector $w^1$ corresponding to this leading eigenspace happens to be nonnegative. This implies that there is a robust intervention that subsidizes the production of all goods, which increases output and welfare both directly and indirectly through the complementarities. That is, making all goods cheaper makes everyone consume more of everything, which turns out to push prices down and demand further up, etc.

There is a remarkable coincidence being used here: Davis--Kahan says that eigenspaces with large  $|\lambda_\ell|$ are the ones that can be recovered through the noise. For an intervention $\sigma$ contained in such an eigenspace, the authority's expenditure on the intervention is $\alpha_\ell \beta_\ell$.  By \cref{eq:V_market_spectral}, a large $\lambda_\ell$ corresponds to a good conversion rate of this expenditure into social surplus: as $|\lambda_\ell|$ grows, the rate approaches $1$, so every dollar spent not only comes back but generates an extra dollar, for a net surplus of $1$. Thus, in \Cref{def:robust_intervention}, the ``return on investment'' $V(\sigma(\widehat{M}))/(\sigma \cdot q^0)$ is not only bounded away from $0$ but essentially as large as possible (recall \cref{eq:V_market_spectral}).

It could have been otherwise: for some notions of surplus in this setting, the most recoverable eigenspaces are \emph{least} consequential for the outcome of interest; i.e., expenditure along the recoverable spaces has almost \emph{no} impact.\footnote{This is true, in particular, of consumer surplus \cite{GaleottiGolubGoyalTalamasTamuz2026}.} Thus, the ability to design a good intervention hinges on an important interplay between the statistics and the microeconomics, and our understanding of this is mediated by the spectral perspective.

There are important subtleties that have been glossed over. For instance, the sketch has not explained why the conditions of \Cref{prop:GGGTT} rule out a $q^0$ whose projection onto leading eigenspaces is too small. Another important challenge in the general analysis is dispensing with any assumptions on eigenvalue gaps, which are not needed for \Cref{prop:GGGTT}.  The interested reader is referred to \cite{GaleottiGolubGoyalTalamasTamuz2026}.

\section*{Closing Reflections} 

Our applications are, in different ways, about something rippling through a network. These ripple effects seem daunting at first, but in each case the right kind of matrix analysis has tamed them to some extent. Statistical uncertainty complicates things again: how useful are the matrix formulas if the matrices are not known?

The previous section illustrates one payoff of reformulating economic problems in spectral terms: we can use statistical results on the recovery of eigenvalues and eigenvectors through noisy observation and sampling. Such recovery strategies constitute a rich and active area of research \cite{chen2021spectral}. Connections to economics promise new applications as well as new mathematical questions.

We have focused on the setting of market interventions to illustrate this interplay simply because that is where the existing research on the topic has been done. But exploiting the synergy between ``spectral microeconomics'' and matrix statistics offers exciting avenues in other applications. One that I would like to emphasize is the public goods model. It seems urgent to improve mechanisms for mutually beneficial collaboration---for instance when nations invest in mitigating disease spread, pollution, or climate change, or when businesses collaborate to establish standards or protect society from risks. Good economic design for improving on an inefficient status quo must deal not only with incentive issues but also with our uncertainty in measuring externalities and strategic spillovers.

\section*{Acknowledgments}
I thank Mark C. Wilson for soliciting this paper and for his encouragement. I also thank the referees for helpful comments that significantly improved the exposition, and Joseph Silverman for handling the final version. Yu-Chi Hsieh and Yann Calvó-Lopez provided exceptional assistance in the preparation and proofreading of the article.

\begin{bibdiv}
\begin{biblist}

\bib{Ballesteretal2006}{article}{
      author={Ballester, Coralio},
      author={Calv\'{o}-Armengol, Antoni},
      author={Zenou, Yves},
      title={Who's who in networks. {W}anted: {T}he key player},
      journal={Econometrica},
      volume={74},
      year={2006},
      number={5},
      pages={1403\ndash 1417},
}

\bib{bindel2015bad}{article}{
      author={Bindel, David},
      author={Kleinberg, Jon},
      author={Oren, Sigal},
      title={How bad is forming your own opinion?},
      journal={Games Econom. Behav.},
      volume={92},
      year={2015},
      pages={248\ndash 265},
}

\bib{CalvoArmengolPatacchiniZenou2009}{article}{
      author={Calv\'{o}-Armengol, Antoni},
      author={Patacchini, Eleonora},
      author={Zenou, Yves},
      title={Peer effects and social networks in education},
      journal={Rev. Econom. Stud.},
      volume={76},
      year={2009},
      number={4},
      pages={1239\ndash 1267},
}

\bib{canfield2005success}{book}{
      author={Canfield, Jack},
      author={Switzer, Janet},
      title={The success principles: {H}ow to get from where you are to where you want to be},
      publisher={William Morrow},
      address={New York},
      year={2005},
      isbn={978-0-06-059488-4},
}

\bib{CCL}{article}{
      author={Cerreia-Vioglio, Simone},
      author={Corrao, Roberto},
      author={Lanzani, Giacomo},
      title={Dynamic opinion aggregation: {L}ong-run stability and disagreement},
      journal={Rev. Econ. Stud.},
      volume={91},
      year={2024},
      number={3},
      pages={1406\ndash 1447},
}

\bib{chen2021spectral}{article}{
      author={Chen, Yuxin},
      author={Chi, Yuejie},
      author={Fan, Jianqing},
      author={Ma, Cong},
      title={Spectral methods for data science: {A} statistical perspective},
      journal={Found. Trends Mach. Learn.},
      volume={14},
      year={2021},
      number={5},
      pages={566\ndash 806},
}

\bib{degroot1974reaching}{article}{
      author={DeGroot, Morris H.},
      title={Reaching a consensus},
      journal={J. Amer. Statist. Assoc.},
      volume={69},
      year={1974},
      number={345},
      pages={118\ndash 121},
}

\bib{elliott2019network}{article}{
      author={Elliott, Matthew},
      author={Golub, Benjamin},
      title={A network approach to public goods},
      journal={J. Polit. Econ.},
      volume={127},
      year={2019},
      number={2},
      pages={730\ndash 776},
}

\bib{gaitonde2020adversarial}{inproceedings}{
      author={Gaitonde, Jason},
      author={Kleinberg, Jon},
      author={Tardos, Eva},
      title={Adversarial perturbations of opinion dynamics in networks},
      booktitle={Proceedings of the 21st {ACM} Conference on Economics and Computation ({EC} '20)},
      publisher={ACM},
      address={New York},
      year={2020},
      pages={471\ndash 472},
}

\bib{GaleottiGolubGoyal2020}{article}{
      author={Galeotti, Andrea},
      author={Golub, Benjamin},
      author={Goyal, Sanjeev},
      title={Targeting interventions in networks},
      journal={Econometrica},
      volume={88},
      year={2020},
      number={6},
}

\bib{GaleottiGolubGoyalTalamasTamuz2026}{misc}{
      author={Galeotti, Andrea},
      author={Golub, Benjamin},
      author={Goyal, Sanjeev},
      author={Talam\`{a}s, Eduard},
      author={Tamuz, Omer},
      title={Robust market interventions},
      year={2026},
      note={arXiv preprint arXiv:2411.03026},
}

\bib{ghosh2008charity}{inproceedings}{
      author={Ghosh, Arpita},
      author={Mahdian, Mohammad},
      title={Charity auctions on social networks},
      booktitle={Proceedings of the Nineteenth Annual {ACM}-{SIAM} Symposium on Discrete Algorithms ({SODA} '08)},
      publisher={SIAM},
      address={Philadelphia, PA},
      year={2008},
      pages={1019\ndash 1028},
}

\bib{GolubJackson2010}{article}{
      author={Golub, Benjamin},
      author={Jackson, Matthew O.},
      title={Na\"{i}ve learning in social networks and the wisdom of crowds},
      journal={Amer. Econ. J. Microecon.},
      volume={2},
      year={2010},
      number={1},
      pages={112\ndash 149},
}

\bib{LPVZ24}{misc}{
      author={Lindquist, Matthew J.},
      author={Patacchini, Eleonora},
      author={Vlassopoulos, Michael},
      author={Zenou, Yves},
      title={Spillovers in criminal networks: {E}vidence from co-offender deaths},
      year={2024},
      note={IZA Discussion Paper No. 17113},
}

\bib{Jackson2008}{book}{
      author={Jackson, Matthew O.},
      title={Social and Economic Networks},
      publisher={Princeton University Press},
      address={Princeton, NJ},
      year={2008},
      isbn={978-0-691-13440-6},
}

\bib{JeongShin2024}{article}{
      author={Jeong, Daeyoung},
      author={Shin, Euncheol},
      title={Optimal influence design in networks},
      journal={J. Econom. Theory},
      volume={220},
      year={2024},
}

\bib{KonigLiuZenou2019}{article}{
      author={K{\"o}nig, Michael D.},
      author={Liu, Xiaodong},
      author={Zenou, Yves},
      title={{R}\&{D} networks: {T}heory, empirics, and policy implications},
      journal={Rev. Econ. Stat.},
      volume={101},
      year={2019},
      number={3},
      pages={476\ndash 491},
}

\bib{lemmens2012nonlinear}{book}{
      author={Lemmens, Bas},
      author={Nussbaum, Roger},
      title={Nonlinear {P}erron--{F}robenius Theory},
      series={Cambridge Tracts in Mathematics},
      publisher={Cambridge University Press},
      address={Cambridge},
      year={2012},
      volume={189},
      isbn={978-1-107-02264-7},
}

\bib{meyer-book}{book}{
      author={Meyer, Carl D.},
      title={Matrix Analysis and Applied Linear Algebra},
      publisher={SIAM},
      address={Philadelphia, PA},
      year={2000},
      isbn={0-89871-454-0},
}

\end{biblist}
\end{bibdiv}

\newpage 
\onecolumn

\begin{figure}[ht!]
    \centering
    \begin{subfigure}[t]{0.38\textwidth}
        \centering
        \adjustbox{valign=T}{\begin{minipage}{\textwidth}
            \centering
            \includegraphics[width=\textwidth]{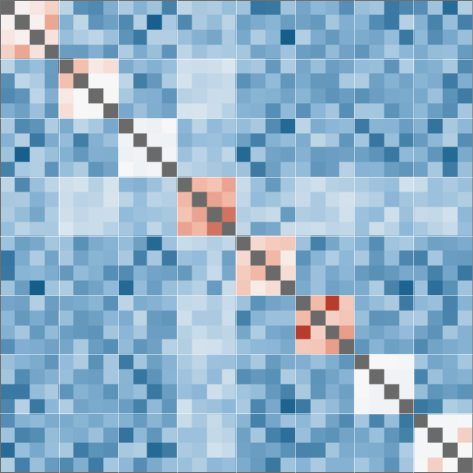}\\[4pt]
            \definecolor{dL}{RGB}{106,106,106}%
            \definecolor{sL1}{RGB}{244,207,196}%
            \definecolor{sL2}{RGB}{232,158,138}%
            \definecolor{sL3}{RGB}{210,115,98}%
            \definecolor{sL4}{RGB}{183,60,45}%
            \definecolor{cL1}{RGB}{218,230,239}%
            \definecolor{cL2}{RGB}{156,193,220}%
            \definecolor{cL3}{RGB}{118,167,201}%
            \definecolor{cL4}{RGB}{21,93,139}%
            \begin{tikzpicture}[font=\footnotesize]
              \fill[dL] (0,0.80) rectangle (0.72,1.10);
              \node[anchor=west] at (0.80,0.95) {$M_{ii}=-1$ (own-price effect)};
              \fill[cL4] (0,0.40) rectangle (0.18,0.70);
              \fill[cL3] (0.18,0.40) rectangle (0.36,0.70);
              \fill[cL2] (0.36,0.40) rectangle (0.54,0.70);
              \fill[cL1] (0.54,0.40) rectangle (0.72,0.70);
              \node[anchor=west] at (0.80,0.55) {$M_{ij}\in[-0.37,\, 0)$ (complements)};
              \fill[sL1] (0,0) rectangle (0.18,0.30);
              \fill[sL2] (0.18,0) rectangle (0.36,0.30);
              \fill[sL3] (0.36,0) rectangle (0.54,0.30);
              \fill[sL4] (0.54,0) rectangle (0.72,0.30);
              \node[anchor=west] at (0.80,0.15) {$M_{ij}\in(0,\, 0.05]$ (substitutes)};
            \end{tikzpicture}
            \vspace{.1in}
        \end{minipage}}
        \caption{An example $M$ matrix with $32$ products arranged in groups of $4$. Products are typically substitutes with others in their own group, which can be interpreted as the same type of good.  Across groups, there are complementarities with heterogeneous intensities $M_{ij}$. For a concrete example, consider different categories of computer peripherals.}
        \label{fig:dmatrix}
    \end{subfigure}
    \hfill
    \begin{subfigure}[t]{0.58\textwidth}
        \centering
        \adjustbox{valign=T}{\resizebox{\textwidth}{!}{\definecolor{accent}{RGB}{0,115,70}
\definecolor{softgray}{RGB}{175,175,175}
\begin{tikzpicture}[x=1cm,y=1.25cm,every node/.append style={scale=1.5}]
  \draw[-{Stealth[length=3mm]}, thick] (0.00,0.45) -- (10.0,0.45);
  \node[font=\normalsize, anchor=north] at (5.0,0.25) {eigenvalue index};
  \draw[-{Stealth[length=3mm]}, thick] (0.60,0.00) -- (0.60,5.25) node[above, font=\normalsize] {$|\lambda_\ell(M)|$};

  \draw[thick] (0.50,3.611) -- (0.60,3.611);
  \node[font=\small, anchor=east] at (0.48,3.611) {$5$};
  \fill[accent] (0.8500,4.5178) circle (7.5pt);
  \draw[accent!85!black, line width=1pt] (0.8500,4.5178) circle (7.5pt);
  \node[font=\normalsize\bfseries, text=accent!85!black, anchor=west] at (1.0800,4.5178) {$|\lambda_1|$};
  \begin{scope}[xshift=18pt]
  \draw[line width=0.8pt] (1.65,1.4977) -- (1.65,4.5178);
  \draw[line width=0.8pt] (1.50,4.5178) -- (1.80,4.5178);
  \draw[line width=0.8pt] (1.50,1.4977) -- (1.80,1.4977);
  \node[font=\small, anchor=west] at (1.85,3.00) {gap};
  \end{scope}
  \fill[softgray] (1.1210,1.4977) circle (2.5pt);
  \fill[softgray] (1.3919,1.4134) circle (2.5pt);
  \fill[softgray] (1.6629,1.3367) circle (2.5pt);
  \fill[softgray] (1.9339,1.3245) circle (2.5pt);
  \fill[softgray] (2.2048,1.2217) circle (2.5pt);
  \fill[softgray] (2.4758,1.2059) circle (2.5pt);
  \fill[softgray] (2.7468,1.1820) circle (2.5pt);
  \fill[softgray] (3.0177,1.1661) circle (2.5pt);
  \fill[softgray] (3.2887,1.1529) circle (2.5pt);
  \fill[softgray] (3.5597,1.1284) circle (2.5pt);
  \fill[softgray] (3.8306,1.1209) circle (2.5pt);
  \fill[softgray] (4.1016,1.1041) circle (2.5pt);
  \fill[softgray] (4.3726,1.0860) circle (2.5pt);
  \fill[softgray] (4.6435,1.0583) circle (2.5pt);
  \fill[softgray] (4.9145,1.0419) circle (2.5pt);
  \fill[softgray] (5.1855,1.0327) circle (2.5pt);
  \fill[softgray] (5.4565,0.9957) circle (2.5pt);
  \fill[softgray] (5.7274,0.9816) circle (2.5pt);
  \fill[softgray] (5.9984,0.9472) circle (2.5pt);
  \fill[softgray] (6.2694,0.9241) circle (2.5pt);
  \fill[softgray] (6.5403,0.9026) circle (2.5pt);
  \fill[softgray] (6.8113,0.8679) circle (2.5pt);
  \fill[softgray] (7.0823,0.8064) circle (2.5pt);
  \fill[softgray] (7.3532,0.7968) circle (2.5pt);
  \fill[softgray] (7.6242,0.7583) circle (2.5pt);
  \fill[softgray] (7.8952,0.6044) circle (2.5pt);
  \fill[softgray] (8.1661,0.5533) circle (2.5pt);
  \fill[softgray] (8.4371,0.5019) circle (2.5pt);
  \fill[softgray] (8.7081,0.4767) circle (2.5pt);
  \fill[softgray] (8.9790,0.4677) circle (2.5pt);
  \fill[softgray] (9.2500,0.4522) circle (2.5pt);
  \node[font=\small, text=gray!70!black] at (6.00,1.33) {small $|\lambda_\ell|$};
  \begin{scope}[shift={(4.8,3.25)}, x=0.864cm, y=0.792cm, every node/.style={scale=1.2}]
    \fill[accent!5] (-0.25,-0.58) rectangle (5.83,2.79);
    \draw[accent!20,thin] (-0.25,-0.58) rectangle (5.83,2.79);
    \node[font=\footnotesize\bfseries,accent!70!black,anchor=north west,align=left] at (-0.05,2.64) {Top-eigenvalue eigenvectors\\Are recoverable};
    \draw[-{Stealth[length=2mm]},accent,line width=1.2pt] (0.15,0.30) -- (2.1,1.10);
    \node[font=\small,accent!85!black,anchor=west] at (2.25,1.10) {$w^1$ {\scriptsize(true)}};
    \draw[-{Stealth[length=2mm]},accent,line width=0.9pt,densely dashed] (0.15,0.30) -- (2.2,0.55);
    \node[font=\small,accent!85!black,anchor=west] at (2.35,0.35) {$\widehat{w}^1$ {\scriptsize(est.)}};
    \fill[accent!50] (0.15,0.30) circle (1.4pt);
    \node[font=\small\itshape,accent!85!black,anchor=west] at (0.05,-0.15) {$\widehat{w}^1 \approx w^1$};
  \end{scope}
  \path (0,-1.31);
\end{tikzpicture}}}
        \caption{Absolute eigenvalues of \(M\). If the error matrix $E$ has entries with standard deviation equal to the magnitude of the typical entry of $M$ ($0.2$), the matrix norm $\Vert E \Vert_2$ is about $2 \cdot 0.2 \cdot \sqrt{n} \approx 2.3$; note the gap between the top eigenvalue and the next is  about twice as large even at this modest network size. If the number of product categories is scaled up keeping the same pattern of matrix entries, the largest eigenvalue grows at the rate $n$, so \Cref{prop:GGGTT} applies.}
        \label{fig:spectrum}
    \end{subfigure}
    \caption{Demand structure and its spectrum in an example.}
  \label{fig:sampling_figure}
\end{figure}

\end{document}